\newtheorem{prop}{Proposition}[section]
\newtheorem{thm}{Theorem}[section]
\numberwithin{equation}{section}
\begin{document}

\normalem

\theoremstyle{definition}
\newtheorem*{definition}{Definition}

\newtheorem{lemma}{Lemma}
\newtheorem{corollary}{Corollary}

\def\bra#1{\langle #1 |}
\def\ket#1{| #1 \rangle}
\def\inner#1#2{\langle #1 | #2 \rangle}

\def\ext{\mathrm{ext} \: }
\def\rep{\textrm{rep} }
\def\Mm{\textrm{Mm}}
\def\Loc{\mathrm{Loc}}
\def\EW{\mathrm{EW}}
\def\CW{\mathrm{CW}}

\def\be{\begin{equation}}
\def\ee{\end{equation}}
\def\bn{\begin{align}}
\def\ebn{\end{align}}

\title{The Boundary Dual of Bulk Local Operators}
\author{Fabio Sanches}
\email{fabios@berkeley.edu} 
\affiliation{Berkeley Center for Theoretical Physics and Department of Physics, University of California, Berkeley, CA 94720, USA }
\author{Sean J. Weinberg}%
 \email{sjasonw@physics.ucsb.edu}
\affiliation{Department of Physics, University of California, Santa Barbara, CA 93106, USA }%

\bibliographystyle{utcaps}

\begin{abstract}
We provide a procedure to determine if a given nonlocal operator in a large $N$ holographic CFT is dual to a local bulk operator on the geometry associated with a particular code subspace of the CFT. This procedure does not presuppose knowledge of the bulk geometry.  We are able to pick out local operators in a large region of the bulk, called the ``localizable region,'' that  can extend beyond event horizons in certain cases.  The method  relies heavily on the quantum-error correcting structure of AdS/CFT and, in particular, on entanglement wedge reconstruction.  As a byproduct of this machinery, we are able to reconstruct the metric in the localizable region up to a conformal factor. This suggests a connection between our program and the recent light-cone cut approach to bulk reconstruction.
\end{abstract}

\maketitle

\tableofcontents


\section{Introduction}
\label{intro}

The emergence of bulk locality in the AdS/CFT correspondence \cite{Maldacena:1997,Witten:1998} has yet to receive a satisfactory explanation in terms of the behavior of holographic CFTs. While gravity prohibits  exact locality in a quantum theory, when the gravitational coupling is sufficiently small, local physics must be  a good approximation in the bulk. There should be a manifestation of this ``emergence of locality'' in the boundary theory.


One way to tackle this issue is by studying the ways in which bulk degrees of freedom are encoded in the CFT. It is thus natural to ask if there is a boundary dual
 of local bulk fields in the regime where semiclassical field theory holds.  While the extrapolate dictionary \cite{BDHM} states that bulk fields at spacelike infinity 
 are dual to local operators on the boundary, points deep in the bulk require a nonlocal holographic description.
There are many well-known ways to reconstruct bulk fields in terms of nonlocal boundary operators \cite{BDHM, HKLL, KLL,Heemskerk:2012mn} with support in a variety of boundary regions.  
All of these procedures, however, require solving bulk equations of motion which presupposes knowledge of the bulk spacetime.  
If one were not explicitly told the metric in the bulk, is there any way to determine whether or not a given operator is bulk local?  To put this question  differently, 
is the concept of a local bulk operator in any way distinguished in the boundary theory?

The primary goal of this work is to address this question.  We will find that a powerful tool to this end is the concept of subregion duality.   The notion that a boundary domain of dependence should
be thought of as being dual to some region of the bulk,
which originally arose from considerations of causal wedge reconstruction, was made precise  recently by
\cite{Almheiri:2014lwa, JLMS, DHW}  where it was concluded that a bulk operator can be reconstructed in 
a subregion of the CFT if and only if its support is contained in the entanglement wedge of that CFT region \cite{DHW}.  This conclusion was made in the context of a new 
development in AdS/CFT: the role of quantum error correction.  It is now understood that a semiclassical bulk spacetime description is associated with a code subspace of the
boundary Hilbert space, and that various inequivalent boundary reconstructions of bulk operators become equivalent when restricted to the code subspace.

 This modern form of subregion duality will reveal a novel characterization of locality in the bulk. Given a holographic CFT and a code subspace dual to some unknown geometry, we will provide
 a procedure that can identify, up to certain caveats, whether or not an operator is dual to a bulk local operator.
 As a byproduct of our method, we are also able to reconstruct the causal structure (equivalently, the metric up to a conformal factor) of a large region in the bulk.  In some examples, 
  this region can penetrate event horizons.

\emph{Outline}.  We start, in section \ref{sec_2}, by reviewing the arguments and motivation for the quantum error correcting view in holography. In particular, we sketch the proof of \cite{DHW}
 that a bulk operator is reconstructable in a boundary region if and only if its support is contained entirely in the entanglement wedge of that boundary region.

Section \ref{sec_3} contains the major constructions of this work.
We define the notion of a \emph{superficially local operator} without making direct reference to the bulk.  These are bulk operators that are ``as local as the boundary can directly tell.''  Their defining
characteristic is the great variety of boundary regions in which they can be reconstructed. In a certain region of the bulk called the \emph{localizable region}, operators are local if and only if they are superficially local.
However, there are situations in which superficially local operators correspond to nonlocal bulk operators that are supported outside of the localizable region.  The bulk regions in which these problematic 
operators lie will be referred to as \emph{clumps}.  Fortunately, clumps appear to always be identifiable  from the boundary theory because they are associated with phase transitions.  Thus, they can be identified and thrown away,
leaving only the superficially local operators that are authentically dual to bulk local operators.

The set of superficially local operators can be given an equivalence relation  by identifying two operators when they can be reconstructed in exactly the same boundary regions.  After removing clumps, the set of
equivalence classes of superficially local operators is naturally identified with the bulk localizable region.  

In \ref{sec_4}, we note that the commutation relations amongst these operators reveals the causal structure in the localizable region.
Thus, we are able to reconstruct the metric in this portion of the bulk up to a conformal rescaling.  This approach is similar at heart to that of  \cite{Engelhardt:2016wgb} where a bulk reconstruction is accomplished
by means of light-cone cuts.  We argue, in fact, that there are numerous interesting connections between our approach and that involving cut singularities.

\section{Principles of Subregion Duality}
\label{sec_2}

This section provides a brief review of the quantum error correcting view of AdS/CFT.  Readers already familiar with the conclusions of \cite{Almheiri:2014lwa, DHW} may wish to proceed to section \ref{sec_3} 

\vspace{.5cm}
There is a zoo of different methods for expressing bulk fields in terms of CFT operators.  The extrapolate dictionary \cite{BDHM} gives a precise relationship between limiting values of bulk fields and CFT operators with corresponding scaling dimensions. It is also possible to express operators lying deeper in the bulk in therms of CFT quantities by solving equations of motion in the bulk \cite{HKLL,BDHM,KLL,Morrison,Heemskerk:2012mn}.  
Of these approaches, one of relevance for our considerations is the causal wedge reconstruction, which generalizes the Rindler reconstruction of \cite{HKLL}.  This prescription expresses local bulk fields in terms of CFT operators localized to a special boundary subregion. Specifically, if $R$ is region in the boundary with domain of dependence $D^\partial(R)$, and if $\CW(R) = {J}^+[D^\partial(R)] \cap {J}^-[D^\partial(R)]$ is the causal wedge \cite{Hubeny:2012wa} of $R$, then causal wedge reconstruction allows a bulk field in $\CW(R)$ to be expressed as a smeared operator in $D^\partial(R)$.\footnote{The smearing function has to be understood in a distributional sense. For details see \cite{Morrison,Bousso:2012mh}. Such subtleties will not be important for what follows. 
} 

Causal wedge reconstruction suggests the possibility that subregions in the boundary are enough to understand the physics of associated bulk subregions. However, despite what is suggested from the analysis of \cite{HKLL}, 
the causal wedge is, in general, not the largest possible region that a boundary subregion holographically describes in the semiclassical limit. Instead, the bulk region dual to a CFT region $R$ is the entanglement wedge of $R$, denoted
by $\EW(R)$ \cite{Wall:2012uf,Headrick:2014cta,Jafferis:2014lza}.  $\EW(R)$ can be defined as follows.  Let $\Sigma$ be a spacelike bulk surface that, after conformal compactification of $M$, is a Cauchy slice for the unphysical bulk spacetime.
Require that $\Sigma$ contains $R$ and its HRT surface $\ext R$. Let $S$ denote the part of $\Sigma$ between $R$ and $\ext R$.  The domain of dependence of $S$ (computed in the unphysical spacetime) is the entanglement
wedge of $R$.  It is known that $ \EW(R) \supseteq \CW(R)$ \cite{Headrick:2014cta}. As we review below, \cite{DHW} gave 
a precise sense in which a boundary region $R$ should really be thought of as being dual to its entanglement wedge.  This is the most refined and powerful known form of ``subregion duality'' \cite{Bousso:2012mh} in AdS/CFT.

Before discussing entanglement wedge reconstruction, we note that subregion duality, even in the form of \cite{HKLL}, raises major puzzles \cite{Almheiri:2014lwa}.  For example, an operator $\phi(p)$ deep within the bulk can be taken to lie in many different causal wedges. Thus, a causal wedge reconstruction of the form 
\be
\phi(p) = \int_{D^\partial(R)} K(p,x) O(x) dx
\ee
manifestly commutes with all operators in the complement region $\bar{R}$. This argument can be repeated for many different boundary regions and used to show that a bulk field $\phi(p)$ near the center of AdS can be written in a way that manifestly commutes with any given operator in the boundary.  This directly implies what should have been obvious: that each choice of reconstruction for $\phi(p)$ is a different operator in the CFT.  This is not an inconsistency.  
Various reconstructions of $\phi(p)$ are distinct CFT operators, but the CFT Hilbert space is much larger\footnote{The basic concept that semiclassical excitations give rise to exponentially small subspaces of a Hilbert space
describing quantum gravitational physics has played a role in many related areas.  See, e.g.,  \cite{Nomura:2013lia,Nomura:2014woa,Papadodimas:2015jra}} than the Hilbert space relevant for a bulk operator on a spacetime background.  The explanation of the multitude of distinct
CFT operators is therefore that there is a special subspace of the Hilbert space, the code subspace, which describes the states that $\phi(p)$ is defined on. The restriction of all reconstructions of $\phi(p)$ to this subspace reproduce $\phi(p)$.
This is a quantum-error correcting property of  the CFT: the action of different operators defined in different regions is the same when restricting to special subspaces called \emph{code subspaces}.

The necessity for such a redundant descriptions of bulk operators was made particularly obvious with the following argument \cite{Almheiri:2014lwa}  illustrated in figure \ref{fig_1} .
 Consider a partition the boundary into 3 equal regions $R_1, R_2, \rm{and }\ R_3$ which only have points on their boundaries in common. Taking  the vacuum state for simplicity, their causal wedges will not contain points that are close to the center of the bulk spacetime. Thus, there is no HKLL smearing over any one region that reconstructs a local bulk operator near the center.  However, the causal wedge of the union of any two regions $\CW(R_i \cup R_j)$ does contain the bulk point of interest and the HKLL procedure can be used. The different choices cannot represent the same CFT operator, since their support is on causally disconnected regions.\footnote{
The mutual intersection actually includes points on the boundaries of the $R_i$. However, repeating the argument with slightly different regions circumvents the possibility that the reconstruction of $\phi$ is achieved only in the algebra of $\partial R_i$
}

\begin{figure}
\centering
\includegraphics[width=8cm]{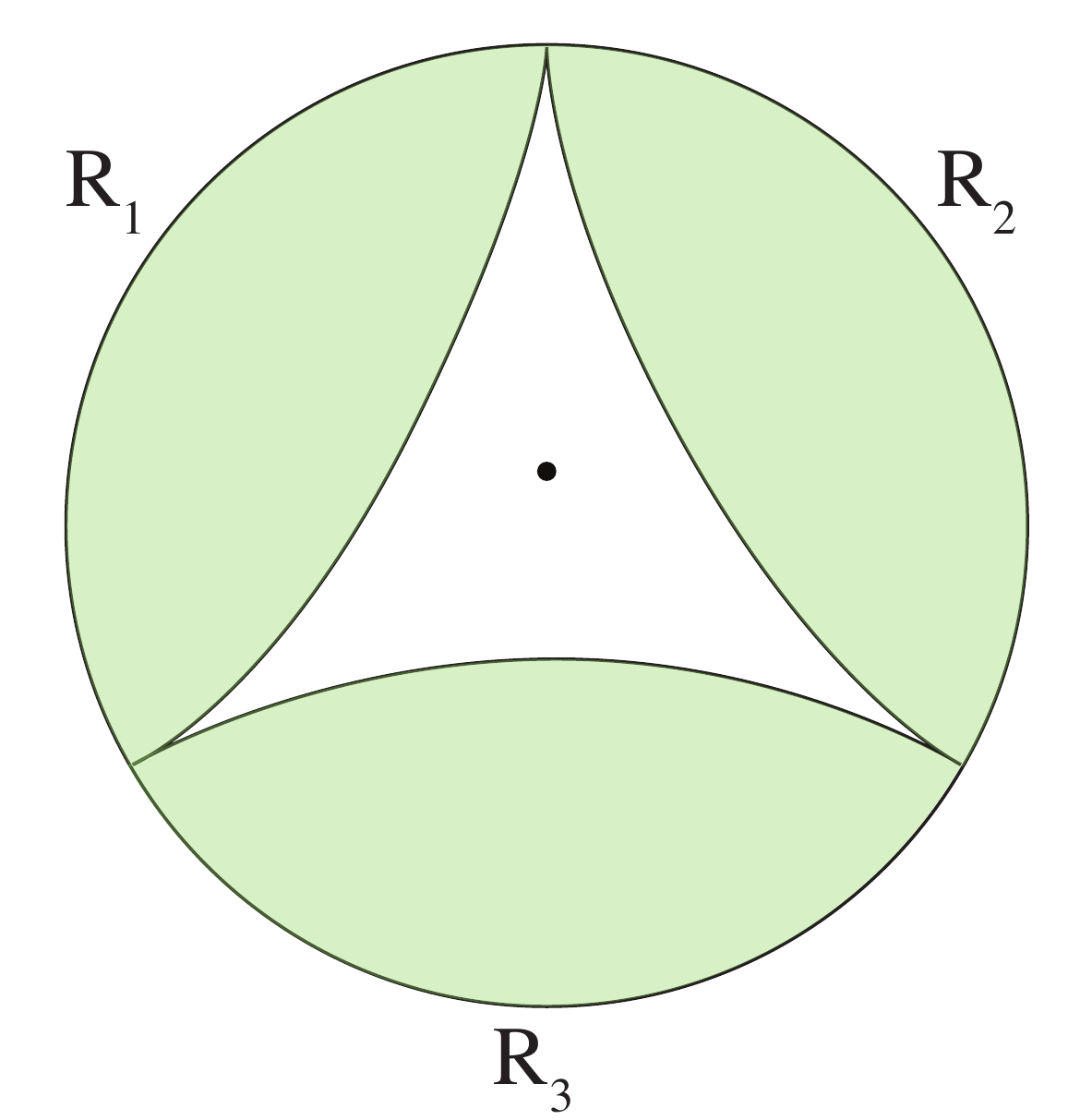}
\caption{The operator depicted in the center of this figure is not in $\CW(R_1)$,  $\CW(R_2)$, or $\CW(R_3)$. However, it does lie in the causal wedge of the union of any two regions $\CW(R_i \cup R_j)$ and can thus be written in terms of
boundary operators in the algebra of the combined regions.   
}
\label{fig_1}
\end{figure}

\begin{center}
\emph{Review of the DHW argument}
\end{center}

The fact that the entanglement wedge $\EW(R)$ is the ``largest'' bulk region that can be reconstructed from the algebra of $R$ will play a critical role in our work. For this reason, we will briefly review the arguments in
 \cite{Almheiri:2014lwa, DHW}, focusing especially on the aspects of this literature that will be the most relevant for the framework that we begin to develop in section \ref{sec_3}.

Suppose that we are given\footnote{While we take the code subspace as given, it should be possible to identify code subspaces purely from the CFT. For example, a necessary (but not sufficient) condition for  a collection of states to lie
in the same code subspace is that the collection has the property that subregions have entanglement entropies differing only by sub-leading contributions in $N$.}
 a particular code subspace $G \subset H$ which is known to be a span of states obtained by acting with a small number of low energy operators on a state where a semiclassical bulk exists; in particular, within $G$,
gravitational backreaction of bulk fields can be treated perturbatively. Dong, Harlow, and Wall (DHW) proved that if the support of an operator $\phi$ is contained in $\EW(R)$, then that 
operator can be reconstructed in $R$  \cite{DHW}. This means that there is an element of the algebra of $R$ whose action on states in the code subspace is the same as the action of $\phi$.

To understand the proof given in \cite{DHW}, we first refer to a result from quantum information. Refs. \cite{Almheiri:2014lwa, beny2007quantum, beny2007generalization}  show that if we have a code subspace G and some factorization of the full Hilbert space   $ G \subset H_{R} \otimes H_{\bar{R}}$, and if $\phi$ is some operator that acts within $G$ (it's action send states in the code subspace to other states in the code subspace), then
the following two statements are equivalent.
\begin{enumerate}
\item There exists an operator $O_R$ on  $H_R$ such that for any $\ket{\psi} \in G$,
\be
\phi \ket{\psi} = O_{R} \ket{\psi} \ \ \ \ \ \ \ \ \ \ \phi^\dagger \ket{\psi} = O^\dagger_{R} \ket{\psi}.
\label{recons1}
\ee
\item For any operator $X_{\bar{R}}$  on on  $H_{\bar{R}}$, we have 
\be
[\phi,X_{\bar{R}}]\big|_G = 0.
\label{com1}
\ee
\end{enumerate}

\noindent  While this theorem follows purely from quantum information, it plays a critical role in the entanglement wedge reconstruction argument. As suggested by the notation, we will associate $R$ with the factorization induced from boundary regions and G will be a code subspace with a semiclassical bulk interpretation. We can now discuss \cite{DHW}, which establishes that bulk semiclassical operators satisfy condition \ref{com1}, and the reconstructability follows because this is equivalent to \ref{recons1}.

We know the boundary Hilbert space can be factorized into a region and its complement $H =H_{R} \otimes H_{\bar{R}}$. For states with a semiclassical bulk interpretation, we can think about the extremal surface anchored to $\partial R$ as inducing its own tensor factorization of the code subspace G$_{\EW(R)} \otimes G_{\EW(\bar{R})}$.

Consider two states $\ket{\psi_0},\ket{\psi_1}\in G$ and the reduced density matrices obtained by tracing out the appropriate complement regions in the two factorizations 
\bn
\rho^0_{\bar R} &= \textrm{Tr}_R \ket{\psi_0} \bra{\psi_0} \nonumber \\
\rho^0_{\EW(\bar R)} &= \textrm{Tr}_{\EW(R) }\ket{\psi_0} \bra{\psi_0} 
\end{align}
Similarly, the density matrices $\rho^1_{\bar R}$ and $\rho^1_{\EW(\bar R)}$  are defined by the state $\ket{\psi_1}$.

The statement of a theorem in \cite{DHW} is that if the states satisfy:
\be
\rho^0_{\EW(\bar R)} = \rho^1_{\EW(\bar R)} \implies \rho^0_{\bar R}  = \rho^1_{\bar R} 
\label{density1}
\ee
then, an operator of the form $\phi = \mathbb{1} \otimes \phi_{\EW(R)}$ acting only within the entanglement wedge of $R$ will satisfy the two equivalent properties of \ref{recons1} and \ref{com1}.

To understand this, we note that the result in \cite{JLMS} established a precise relationship between the bulk and boundary modular hamiltonian. This provides the connection between the first equality and second equality in \ref{density1}. Now, the operator $\phi$ supported in the entanglement wedge of a boundary region $R$ does not affect the state in the complement wedge (this just follows from semiclassical field theory). Thus, if we define $\ket{\psi_1}$ as
\be
\ket{\psi_1} = e^{i\epsilon \phi}\ket{\psi_0}
\label{state1}
\ee
the first equality in \ref{density1} is satisfied. The second equality then implies that the expectation value of any operator in the algebra of $\bar{R}$ is the same in both states:
\be
\bra{\psi_0} X_{\bar{R}} \ket{\psi_0} - \bra{\psi_1} X_{\bar R} \ket{\psi_1} =0 
\ee
Rewriting the second term using \ref{state1} and expanding to first order in $\epsilon$ we obtain \ref{com1}.

This proves that within the code subspace, we can express operators in the entanglement wedge of $R$ in terms of operators in the algebra of $R$. Moreover, if an operator on $G$ has support outside $\EW(R)$, it must have
no reconstruction in $R$.  To see this, suppose that an operator $\phi$ on $G$ had support outside $\EW(R)$ so that it fails to commute with some operator $\phi^\prime$ on $\EW(\bar{R})$.  The argument above shows that there
exists a reconstruction $O^\prime_{\bar{R}}$ of $\phi^\prime$ that acts on $\bar{R}$. If $\phi$ could be reconstructed with an operator $O_R$ on $R$, we would have $[O^\prime_{\bar{R}}, O_R] =0$ which contradicts the fact that $[\phi^\prime,\phi]\neq0$. 
%

Our final conclusion is that an operator acting on a code subspace can be reconstructed in a region $R$ of the CFT if and only if its support in entirely contained in $\EW(R)$. 
By exploiting the reconstructability for states in the code subspace, we now explore how the bulk, including the conformal metric, is encoded in the CFT. 

We note that the reconstructability argument itself is a statement about a special class of quantum states and makes no reference to the plank length in the bulk. However, in making the connection between the reduced density matrix in the entanglement wedge \cite{JLMS} and the boundary, one clearly needs to assume some notion of locality. In particular, this involves taking $N \rightarrow \infty$.

\section{Superficially Local Operators}
\label{sec_3}

For the rest of this paper we work in the context of the ``infinite $N$ limit.''  It is assumed that there are code subspaces $\{G\}$ of the CFT Hilbert space $H$ that are
holographically dual to quantum field theory on (asymptotically AdS) spacetime backgrounds.    Setting $N=\infty$ in this way may cause discomfort, especially with some of the more
complicated things we discuss below, and for this reason we have
provided appendix \ref{app1} which defines our quantities while taking the large $N$ limit more carefully.   Even without reading the appendix, 
the majority of our development can made much more precise simply by replacing equalities with approximate equalities which, in the large $N$ limit, approach authentic equalities.  

In this section we are going to almost completely answer a fundamental question:  Suppose that a code subspace $G$ is given and that we are told that $G$ is dual
to some unknown field theory on some unknown spacetime background.  Let $\phi$ be a given operator on  $G$.  
Is  $\phi$ dual to a local operator? Note that we are given no information about $\phi$ (other than how it acts on $G$) and, in particular, it is probably not a local CFT operator.   
The ability to answer this question is equivalent to finding
all of the CFT operators that are dual to local bulk operators with respect to our particular code subspace.

Prior work addresses related issues but falls short of providing a general identification of local bulk operators. Consider, again, the HKLL method \cite{HKLL}.  If $\phi$ is a quantum field in the bulk $M$, then, given a point $p \in M$, it is possible to solve the field equation of motion and obtain an expression of the form
\begin{equation}
\label{HKLL}
\phi(p) = \left( \int_{\partial M} K(p,x) O(x) d^{D-1} x\right) \bigg{|}_{G}.
\end{equation}
Here, the boundary field $O$ is the one associated with $\phi$ through the extrapolate dictionary.  As discussed above, the integration kernel $K$ is not unique.  While different choices of $K$ yield different CFT operators,
 the restriction of these different choices of operators to the code subspace $G$ must always give the same answer.

At a first glance, equation \ref{HKLL} appears to not only identify the nonlocal CFT operators that are dual to local bulk operators, but even provides a formula for them.  This is not the case however.
The integration kernel can only be found by solving equations of motion on the curved spacetime background $M$, and this assumes knowledge of what the background is.  There are very few code subspaces
for which the corresponding geometry is known.  Another reason that the HKLL procedure is unsatisfactory for our purposes is that  it only identifies a subset of the boundary operators that
are dual to local bulk operators.  We would like to find a more general characterization of locality in the bulk at leading order in $1/N$.

\subsection{Comparing Locality of Operators}

\begin{figure}
\centering
\includegraphics[width=8cm]{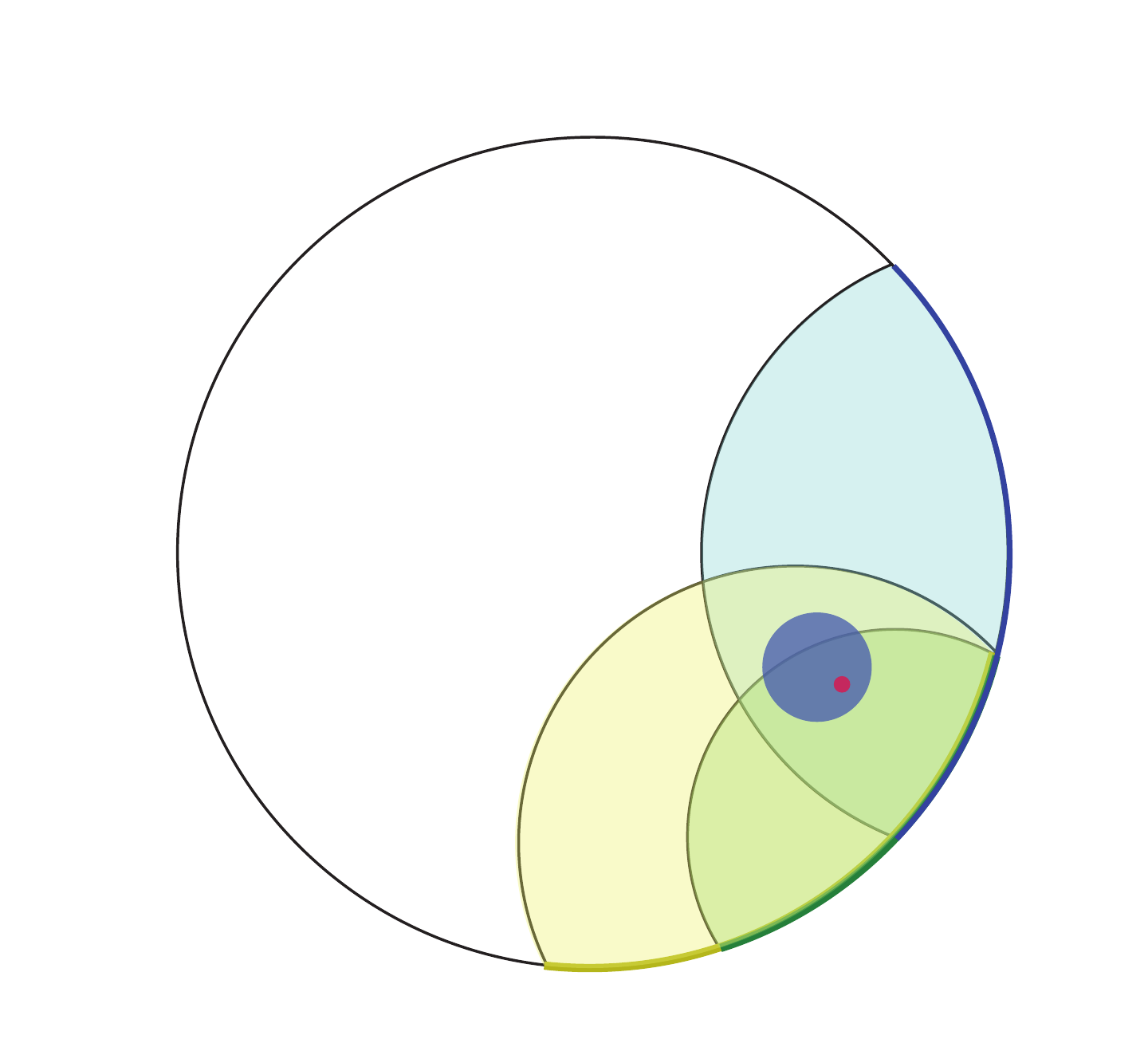}
\caption{
A nonlocal bulk operator $\phi_1$ will clearly lie in fewer regions than an operator $\phi_2$ whose support is entirely contained in the first $\mathcal{Q}(\phi_1) \subset \mathcal{Q}(\phi_2$) . 
}
\label{fig_2}
\end{figure}

Our guiding principle is that that, roughly speaking, the more local a bulk operator $\phi$ is, the more distinct boundary regions exist for which $\phi$ can be reconstructed.  This follows from
subregion duality as explained in section \ref{sec_2}.  To make this concept more precise, we are going to employ the full power of the quantum error-correcting structure of AdS/CFT to introduce 
a function $\mathcal{Q}$ that  maps operators on $G$ to the collection of all possible boundary regions that can reconstruct a given operator.  $\mathcal{Q}$ will then provide a measure of locality
of every operator.   We now explain this precisely.

Let $\mathcal{R}$ denote the collection of all $D-2$ dimensional achronal submanifolds of $\partial M$.  Informally, $\mathcal{R}$ is the collection of all regions $R$ upon which one would compute a
von Neumann entropy by anchoring stationary surfaces \cite{Ryu:2006bv,Hubeny:2007xt} to $\partial R$.  Note that we are not restricting to a single time slice of $\partial M$.  
If $R\in \mathcal{R}$ and $\phi$ is an operator that, along with its hermitian conjugate, acts on the code subspace $G$, then $\phi$ is said to be reconstructable in $R$
if there exists $O$ in the algebra of $R$ such that $O \big|_G = \phi$ and $O^\dagger \big|_G = \phi^\dagger$. We now give a critical definition:

\begin{definition}
Suppose that $\phi$ is an operator on $G$ and $R \in \mathcal{R}$.  Then,  we define
\[
\mathcal{Q}(\phi) = \{R \in \mathcal{R}  \: \big| \: \phi \textrm{ is reconstructable in } R  \}.
\]
\end{definition}

\noindent Whatever the (unknown) geometry of $M$ is, subregion duality (see section \ref{sec_2}) gives a geometrical condition for $\mathcal{Q}(\phi)$ to contain a region $R$.  Specifically, $R \in \mathcal{Q}(\phi)$
if and only if the (bulk) support\footnote{The support of an operator is defined as follows.
Let $A$ be a (possibly nonlocal) operator on a quantum field theory on the curved spacetime $M$.  Let $U$ be the set of points in $M$ such that for every point $p$ in $U$, every local bulk operator at $p$ commutes with $A$.
Then, the support of $A$, denoted by $\textrm{supp } A$, is given by  $M \setminus \left( J_+(U) \cup J_-(U) \right)$}  of $\phi$ is contained in the entanglement wedge of $R$.  This immediately implies the following properties of $\mathcal{Q}$:

\begin{prop}
\label{Q_properties}
Let $\phi_1$ and $\phi_2$ be two operators on the code subspace $G$.  Then,
\begin{enumerate}
\item If  $\mathrm{supp} \: \phi_1 \supseteq \mathrm{supp} \:  \phi_2$ , then $\mathcal{Q}(\phi_1) \subseteq \mathcal{Q}(\phi_2)$,
\item if $\mathrm{supp} \: \phi_1 = \mathrm{supp} \:  \phi_2$, then $\mathcal{Q}(\phi_1) = \mathcal{Q}(\phi_2)$.
\end{enumerate}
\end{prop}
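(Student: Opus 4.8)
The plan is to reduce both statements to the geometric characterization of $\mathcal{Q}$ established immediately above: for any operator $\phi$ on $G$ and any $R \in \mathcal{R}$ one has $R \in \mathcal{Q}(\phi)$ if and only if $\mathrm{supp}\,\phi \subseteq \EW(R)$. This is the only nontrivial input. Once it is in hand, the proposition becomes a pure monotonicity argument about containments of subsets of $M$, with no further reference to the CFT, to the algebra of $R$, or to entanglement entropies.

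For the first claim I would assume $\mathrm{supp}\,\phi_1 \supseteq \mathrm{supp}\,\phi_2$ and establish the inclusion $\mathcal{Q}(\phi_1) \subseteq \mathcal{Q}(\phi_2)$ by taking an arbitrary element. So let $R \in \mathcal{Q}(\phi_1)$. By the characterization this means $\mathrm{supp}\,\phi_1 \subseteq \EW(R)$, and combining with the hypothesis gives the chain $\mathrm{supp}\,\phi_2 \subseteq \mathrm{supp}\,\phi_1 \subseteq \EW(R)$. Transitivity of $\subseteq$ then yields $\mathrm{supp}\,\phi_2 \subseteq \EW(R)$, and applying the characterization in the reverse direction gives $R \in \mathcal{Q}(\phi_2)$. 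Since $R$ was arbitrary, $\mathcal{Q}(\phi_1) \subseteq \mathcal{Q}(\phi_2)$.

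The second claim follows by invoking the first in both directions. If $\mathrm{supp}\,\phi_1 = \mathrm{supp}\,\phi_2$, then in particular $\mathrm{supp}\,\phi_1 \supseteq \mathrm{supp}\,\phi_2$ and $\mathrm{supp}\,\phi_2 \supseteq \mathrm{supp}\,\phi_1$; part 1 applied to each gives $\mathcal{Q}(\phi_1) \subseteq \mathcal{Q}(\phi_2)$ and $\mathcal{Q}(\phi_2) \subseteq \mathcal{Q}(\phi_1)$, and antisymmetry of $\subseteq$ forces $\mathcal{Q}(\phi_1) = \mathcal{Q}(\phi_2)$.

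I expect no genuine obstacle in the logical argument itself; all of its content lives in the ``if and only if'' that I am quoting from the discussion preceding the statement. The one point deserving care is that the notion of support entering this characterization is precisely the geometric one fixed in the footnote — the causal complement $M \setminus \left( J_+(U) \cup J_-(U) \right)$ of the set $U$ of points whose local bulk operators all commute with $\phi$ — so that the condition $\mathrm{supp}\,\phi \subseteq \EW(R)$ is exactly the hypothesis of the DHW reconstructability theorem reviewed in section \ref{sec_2}. Establishing that equivalence is where the real work was done; given it, Proposition \ref{Q_properties} is simply the statement that $\EW$-containment, and hence $\mathcal{Q}$, is monotone with respect to reverse inclusion of supports.
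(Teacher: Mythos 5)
Your proof is correct and is exactly the argument the paper intends: the paper states the proposition follows ``immediately'' from the equivalence $R \in \mathcal{Q}(\phi) \iff \mathrm{supp}\,\phi \subseteq \EW(R)$ established by subregion duality, and your monotonicity argument is precisely that immediate implication spelled out. Nothing is missing; you have correctly identified that all the substance lives in the DHW characterization, not in the proposition itself.
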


Note that the converses to these statements, though seemingly desirable, are false in many cases.  This is somewhat disappointing: the bulk
support of an operator is a property of the operator's bulk description while $\mathcal{Q}$ is a function that is manifestly defined in the boundary theory.  Our goal is to find a ``boundary-only'' characterization of bulk locality, so we would be much better off if the converse to Proposition \ref{Q_properties}  were in fact true.

What $\mathcal{Q}$ does accomplish is that it identifies the support of an operator to the greatest possible ``resolution'' that the boundary theory can easily see.  For this reason we define an equivalence relation on operators on $G$: $\phi_1 \sim \phi_2$ if $\mathcal{Q}(\phi_1) = \mathcal{Q}(\phi_2)$.  We use the notation $[\phi]$ to denote the equivalence class of $\phi$ with respect to this relation.  In other words, $[\phi]=Q^{-1} (Q(\phi))$.  Two operators are in the same class if they
are ``the same as far as $\mathcal{Q}$ can tell.''  We can attempt to compare the locality of two operators by putting a partial ordering on the collection of equivalence classes by writing $[\phi_1] \leq [\phi_2]$ if $\mathcal{Q}(\phi_1)
\subseteq \mathcal{Q}(\phi_2)$ (which is a well-defined relation).  Note that a trivial operator like the identity on $G$, denoted by $1_G$, can be reconstructed in any region.  Thus, $[\phi] \leq [1_G]$ for any operator $\phi$ on $G$.

We are now ready to give a plausible characterization of a local bulk operator by means of $\mathcal{Q}$.

\begin{definition}
Suppose that $\phi$ is an operator on $G$.  $\phi$ is said to be \emph{superficially local} if
\begin{enumerate}
\item $[\phi] \neq [1_G]$ and
\item Every operator $\phi^\prime$ with the property that $[\phi] \leq [\phi^\prime]$ has $[\phi^\prime] \in \{ [\phi], [1_G] \}$.
\end{enumerate}
\end{definition}

\noindent We emphasize that the definition of a superficially local operator makes reference only to the boundary theory.  Thus, we can use this definition to offer an answer to the question posed above:
if we are given a large $N$ CFT with a Hilbert space $H$, a subspace $G$ of $H$, and an operator $\phi$, and if we told that $G$ is a code subspace corresponding to an unknown bulk spacetime,
then we can guess that $\phi$ is a local operator in the dual bulk theory if it acts on $G$ and if its restriction to $G$ is a superficially local operator.  This answer turns out to be right in many cases.

The word ``superficial'' is used for two reasons.  First, as we will shortly see,
there are examples of asymptotically AdS spacetimes for which some local bulk operators (for instance, those lying close to a spacelike singularity) are not superficially local.  Second,
we will not prove that every superficially local operator is local in the bulk.  The first of these deficiencies is completely unavoidable and it is tempting to contemplate its relation to the difficulties
of using AdS/CFT to describe points deep within a black hole interior \cite{Almheiri:2012rt} (although we will not pursue such contemplations here).  The second apparent deficiency is not a problem: in section \ref{sec_4} we will argue that it is possible  to identify when a given equivalence class of superficially local operators contains operators that are not actually local in the bulk.  This argument will be made in the boundary theory.  
The concept of superficial locality therefore provides a way to confidently identify a very large collection of operators on $G$ that should be interpreted as local operators in the bulk.  We now explain exactly which bulk operators can be found in this way.

\subsection{The Localizable Region}

As above, let $M$ be the asymptotically AdS bulk spacetime that is dual to a code subspace $G$ of a CFT in the large $N$ limit with Hilbert space $H$.  
In this section we are going to identify a special subset of $M$, denoted by $\Loc (M)$, which has the property that local bulk operators  at
points in $\Loc(M)$ can be successfully identified in the boundary theory through the consideration of superficially
local operators.

\begin{definition}
The \emph{localizable region} of $M$, denoted $\Loc(M)$, is the set of points $p \in M$ satisfying
\begin{enumerate}
\item If $\mathrm{supp \:}\phi = \{p\}$, then $\phi$ is superficially local and
\item if $\mathrm{supp \:}\phi = \{p\}$ and $[\phi^\prime] = [\phi]$, then $\mathrm{supp \:}\phi^\prime = \{p\}$.
\end{enumerate}
\end{definition}

\noindent Elements of $\Loc(M)$ will sometimes be called \emph{localizable points}.  Note that $\Loc(M)$ is a subset of the bulk and its definition makes reference to the concept of the bulk support of an operator, 
so this definition is not particularly transparent from the boundary theory.  However, a connection with the boundary theory
becomes apparent when $\Loc(M)=M$:
\begin{prop}
\label{loc_equals_suploc}
If $\Loc(M) = M$, an operator $\phi$ on $G$ is superficially local if and only if it is local in the bulk.  Moreover, if $\phi_1$ and $\phi_2$ are two superficially
local operators with $[\phi_1] = [\phi_2]$, then they must be local at the same bulk point.
\end{prop}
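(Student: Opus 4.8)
The plan is to prove both directions of the equivalence using only Proposition~\ref{Q_properties} together with the two defining clauses of $\Loc(M)$, exploiting the hypothesis $\Loc(M)=M$ so that \emph{every} bulk point hosts a point operator which is automatically superficially local. The easy direction is ($\Leftarrow$): if $\phi$ is local in the bulk then $\mathrm{supp}\,\phi=\{p\}$ for some $p\in M$, and since $p\in\Loc(M)$ the first localizability condition tells us directly that $\phi$ is superficially local. No further work is needed in this direction.

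The substance is the forward direction ($\Rightarrow$). First I would observe that a superficially local $\phi$ must have nonempty support: an operator with empty support commutes with every local bulk operator, hence is reconstructable in every region, giving $[\phi]=[1_G]$, which is excluded by the first clause of superficial locality. I would then fix any $p\in\mathrm{supp}\,\phi$ and introduce a genuine local operator $\phi_p$ at $p$, so that $\mathrm{supp}\,\phi_p=\{p\}\subseteq\mathrm{supp}\,\phi$. Proposition~\ref{Q_properties}(1) then yields $\mathcal{Q}(\phi)\subseteq\mathcal{Q}(\phi_p)$, i.e.\ $[\phi]\le[\phi_p]$ in the partial order.

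At this point the coatom structure encoded in superficial locality does the work. Since $\phi$ is superficially local and $[\phi]\le[\phi_p]$, we must have $[\phi_p]\in\{[\phi],[1_G]\}$. The option $[\phi_p]=[1_G]$ is ruled out, because $\Loc(M)=M$ places $p$ in $\Loc(M)$, so its point operator $\phi_p$ is itself superficially local and hence satisfies $[\phi_p]\neq[1_G]$. Therefore $[\phi_p]=[\phi]$. Now I would invoke the second localizability condition at $p$: since $\mathrm{supp}\,\phi_p=\{p\}$ and $[\phi]=[\phi_p]$, every operator in the class $[\phi_p]$ is supported exactly on $\{p\}$, and in particular $\mathrm{supp}\,\phi=\{p\}$. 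Thus $\phi$ is local at $p$, completing the equivalence.

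For the ``moreover'' statement, the argument above already shows each superficially local $\phi_i$ is local at some point $p_i$, so $\mathrm{supp}\,\phi_1=\{p_1\}$ and $\mathrm{supp}\,\phi_2=\{p_2\}$. Given $[\phi_1]=[\phi_2]$, the second localizability condition applied at $p_1$ (with $\phi_1$ as the point operator and $\phi_2$ lying in its class) forces $\mathrm{supp}\,\phi_2=\{p_1\}$, so $p_2=p_1$ and the two operators are local at the same bulk point. I expect the only genuine obstacle to lie in the bookkeeping at the start of the forward direction, namely making precise that a superficially local operator has nonempty support and that one may always select a true point operator $\phi_p$ at a chosen $p\in\mathrm{supp}\,\phi$; the remainder is a direct chase through the partial order and the two clauses defining $\Loc(M)$.
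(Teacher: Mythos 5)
Your proof is correct and follows essentially the same route as the paper's: compare $\phi$ against a genuine point operator $\phi_p$ at a support point via Proposition~\ref{Q_properties}, use superficial locality to force $[\phi]=[\phi_p]$, and then invoke the second defining condition of $\Loc(M)$ to pin $\mathrm{supp}\,\phi$ down to $\{p\}$, with the ``moreover'' clause handled identically. The only differences are cosmetic: you argue directly where the paper argues by contradiction (assuming two distinct support points), and you make explicit the empty-support case, a bookkeeping point the paper leaves implicit.
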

\begin{proof}
If $\phi$ is a local operator, the definition of $\Loc(M)$ immediately demands that $\phi$ is superficially local. Conversely, let suppose that $\phi$ is superficially
local.  If $\phi$ is not local in the bulk, then there are at least two distinct points $p$ and $q$ in the support of $\phi$.  Let $\phi^\prime$ be a local operator at $p$.
By Proposition \ref{Q_properties}, the fact that $\mathrm{supp \:} \phi^\prime \subseteq   \mathrm{supp \:} \phi$ means that $[\phi] \leq [\phi^\prime]$.  But $\phi$ is
superficially local and $\phi^\prime$ is nontrivial so we conclude that $[\phi] = [\phi^\prime]$.  The definition of the localizable region now demands that
$\mathrm{supp \:}\phi = \{p\}$, a contradiction.

Now suppose that $\phi_1$ and $\phi_2$ are two superficially local operators with $[\phi_1] = [\phi_2]$.  From what we just proved, we know that $\phi_1$ is local at some point, so the definition of the localizable region immediately demands that $\phi_1$ and $\phi_2$ are local at the same point.
\end{proof}

This result is a first step to providing a boundary description of $\Loc(M)$ because the notion of superficial locality is one of the boundary theory.
Unfortunately the hypothesis of Proposition \ref{loc_equals_suploc} is often too much to ask for.  To better understand this, consider the following result
which which establishes a geometrical bulk interpretation of $\Loc(M)$.\footnote{Theorem \ref{Loc_iff} elucidates the connection between our program and the ideas of \cite{Czech:2014ppa, Czech:2015qta, 
Balasubramanian:2013rqa,Balasubramanian:2013lsa,Czech:2014wka,  Headrick:2014eia}.  Note this work is primarily interested in the reconstruction of bulk geometry while our focus is on operator reconstruction.  However,
below in section \ref{sec_4} we will reconstruct aspects of the bulk geometry.}
\begin{thm}
\label{Loc_iff}
$p \in \Loc(M)$ if and only if there exists a subset $\mathcal{R}_0$ of the collection of boundary regions $\mathcal{R}$ such that
\[
\bigcap_{R\in \mathcal{R}_0} \EW(R) = \{p\}.
\]
\end{thm}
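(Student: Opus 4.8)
The plan is to prove both directions by connecting the purely order-theoretic/boundary definition of $\Loc(M)$ to the geometric statement about intersections of entanglement wedges, using Proposition \ref{Q_properties} as the bridge. The key translation is that for a local operator $\phi$ with $\mathrm{supp}\:\phi=\{p\}$, subregion duality tells us $R\in\mathcal{Q}(\phi)$ if and only if $p\in\EW(R)$. Hence $\mathcal{Q}(\phi)$ is exactly the collection of boundary regions whose entanglement wedge contains $p$, and the intersection $\bigcap_{R\in\mathcal{Q}(\phi)}\EW(R)$ is the set of bulk points that lie in \emph{every} such wedge.

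For the forward direction, suppose $p\in\Loc(M)$. I would take $\phi$ to be a local operator at $p$, so $\mathrm{supp}\:\phi=\{p\}$, and set $\mathcal{R}_0=\mathcal{Q}(\phi)$. Every $R\in\mathcal{R}_0$ has $p\in\EW(R)$, so $p\in\bigcap_{R\in\mathcal{R}_0}\EW(R)$. The work is to show no other point $q$ survives the intersection. If some $q\neq p$ lay in every $\EW(R)$ with $R\in\mathcal{Q}(\phi)$, then a local operator $\phi'$ at $q$ would be reconstructable in every such $R$, giving $\mathcal{Q}(\phi)\subseteq\mathcal{Q}(\phi')$, i.e. $[\phi]\leq[\phi']$. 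Superficial locality of $\phi$ (condition 1 of $\Loc$) forces $[\phi']\in\{[\phi],[1_G]\}$; since $\phi'$ is a nontrivial local operator, $[\phi']=[\phi]$. But then condition 2 of the $\Loc$ definition demands $\mathrm{supp}\:\phi'=\{p\}$, contradicting that $\phi'$ is local at $q\neq p$. Hence the intersection is exactly $\{p\}$.

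For the converse, suppose there is $\mathcal{R}_0$ with $\bigcap_{R\in\mathcal{R}_0}\EW(R)=\{p\}$. I must verify both defining conditions of $\Loc(M)$ for an operator $\phi$ with $\mathrm{supp}\:\phi=\{p\}$. Reconstructability gives $\mathcal{R}_0\subseteq\mathcal{Q}(\phi)$ since $p\in\EW(R)$ for each $R\in\mathcal{R}_0$. To show $\phi$ is superficially local, I take any $\phi'$ with $[\phi]\leq[\phi']$, so $\mathcal{Q}(\phi)\subseteq\mathcal{Q}(\phi')$ and in particular $\mathcal{R}_0\subseteq\mathcal{Q}(\phi')$; thus $\mathrm{supp}\:\phi'\subseteq\bigcap_{R\in\mathcal{R}_0}\EW(R)=\{p\}$, forcing $\mathrm{supp}\:\phi'$ to be either $\{p\}$ or empty. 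The empty case corresponds to a spatially trivial operator, which I would argue gives $[\phi']=[1_G]$, while the $\{p\}$ case gives $\mathrm{supp}\:\phi'=\mathrm{supp}\:\phi$, whence $[\phi']=[\phi]$ by Proposition \ref{Q_properties}(2). This establishes condition 1, and the same computation with $[\phi']=[\phi]$ yields $\mathrm{supp}\:\phi'=\{p\}$, establishing condition 2.

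The main obstacle I anticipate is the step in the forward direction asserting that a point $q$ surviving the intersection actually yields an operator $\phi'$ \emph{localized at a single point} $q$ with $\mathcal{Q}(\phi)\subseteq\mathcal{Q}(\phi')$; one must be careful that reconstructability in $R$ requires full containment of the support in $\EW(R)$, so choosing $\phi'$ genuinely local at $q$ (rather than smeared) is what makes $q\in\EW(R)$ suffice. A subtler point, which I would address explicitly, is the treatment of the ``empty support'' or identity-like case in the converse: I need that an operator whose support is empty is reconstructable in \emph{every} region and hence has $[\phi']=[1_G]$, so that it does not spuriously violate superficial locality. Both gaps are bridged cleanly by the precise definition of bulk support given in the footnote together with the if-and-only-if form of subregion duality established in section \ref{sec_2}.
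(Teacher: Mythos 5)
Your proposal is correct and follows essentially the same route as the paper's proof: the converse direction uses the identical chain of inclusions $\mathrm{supp}\:\phi' \subseteq \bigcap_{R\in\mathcal{Q}(\phi')}\EW(R) \subseteq \bigcap_{R\in\mathcal{R}_0}\EW(R) = \{p\}$, and your forward direction (taking $\mathcal{R}_0 = \mathcal{Q}(\phi)$ and deriving a contradiction from a surviving point $q \neq p$ via a local operator at $q$) is the paper's argument merely rephrased as a direct proof rather than by contradiction. Your explicit handling of the empty-support case matches the paper's remark that $\mathrm{supp}\:\phi' = \emptyset$ would force $[\phi'] = [\phi] = [1_G]$, which is excluded.
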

\begin{proof}
Suppose first that there exists $\mathcal{R}_0$ satisfying the condition given in the statement of the theorem.  Fix a local bulk operator $\phi$ 
at $p$ so that $\mathrm{supp \:} \phi = \{p\}$.  $\mathcal{Q}(\phi)$ must contain all regions $R$ with $p \in \EW(R)$ so, in particular, $\mathcal{R}_0 \subseteq \mathcal{Q}(\phi)$.
If $\phi^\prime$ is some operator on $G$ with $[\phi^\prime] \geq [\phi]$, then  $\mathcal{Q}(\phi) \subseteq \mathcal{Q}(\phi^\prime)$ so we have
\begin{align*}
\mathrm{supp \:} \phi^\prime   \subseteq \bigcap_{R\in \mathcal{Q}(\phi^\prime)}  \EW(R)  \subseteq \bigcap_{R\in \mathcal{Q}(\phi)} \EW(R)& \\
 \subseteq \bigcap_{R\in \mathcal{R}_0} \EW(R) = \{p\}.&
\end{align*}
This implies that $\phi$ is superficially local so the first condition for $p\in \Loc(M)$ is satisfied. If it happens that the operator $\phi^\prime$ above
satisfies $[\phi^\prime] = [\phi]$, our argument still applies and we must therefore have $\mathrm{supp \:} \phi^\prime \subseteq \{p\}$.  It is not possible
to have $\mathrm{supp \:} \phi^\prime = \emptyset$ since this would require that $[\phi^\prime] = [\phi] = \mathcal{R}$ which is false.  We conclude that
 $\mathrm{supp \:} \phi^\prime = \{p\}$ and thus that $p \in \Loc(M)$.
 
 We now prove the converse.  Let $p$ lie in $\Loc(M)$.  Suppose that there does not exist any $\mathcal{R}_0 \subseteq \mathcal{R}$ with
 $\bigcap_{R\in \mathcal{R}_0} \EW(R) = \{p\}$.  Let $\phi$ be a local operator at $p$ which requires that $\phi$ is superficially local.  There
 must exist a point $q \in M$ with
 \[
 q \in  \left(\bigcap_{R\in \mathcal{Q}(\phi)} \EW(R) \right) \setminus \{p\}.
 \]
 Now consider a local operator $\phi^\prime$ at the point $q$.  Since $q$ lies in the entanglement wedge of every region whose entanglement wedge contains $p$,
 we have $[\phi] \leq [\phi^\prime]$.  The superficial locality of $\phi$, along with the fact that $\phi^\prime$ is not trivial, implies now that $[\phi] = [\phi^\prime]$ which,
 by the definition of $\Loc(M)$, implies that $\mathrm{supp \:} \phi^\prime = \{p\}$ which is a contradiction.
\end{proof}

%

Theorem  \ref{Loc_iff} is a useful tool for identifying examples of localizable regions in asymptotically AdS spacetimes as we will do in section \ref{sec_examples}.
For now, we only advertise some facts that may be of interest.  Localizable regions can extend quite far into the bulk spacetime.  For the same reason that extremal
surfaces can penetrate event horizons in some cases, $\Loc(M)$ can intersect a black hole interior.  However, points that are too close to spacelike
singularities are not localizable.  Another interesting property of localizable regions is that they are not always subsets of the portion of the bulk that is accessible
to boundary-anchored extremal codimension 2 surfaces with minimal area.  In other words, $\Loc(M)$ can have a nonempty intersection with the entanglement shadow \cite{Balasubramanian:2014sra}.
Before discussing these examples, however, we are going to introduce an object that will greatly increase the motivation for studying the localizable region.

\subsection{The Space of Classes}
\label{sec_class_space}

The object that we now study is the collection of all equivalence classes of superficially local operators on $G$.  We suggestively denote this set by
$\tilde{M}$:
\[
\tilde{M} = \left\{ [\phi] \: \big| \: \phi \textrm{ is a superficially local operator on } G\right\}.
\]
Given that an element $P \in \tilde{M}$ is a set of operators, all of which have the same value of $\mathcal{Q}$, it is convenient to let define $\mathcal{Q}(P)$
as $\mathcal{Q}(\phi)$ for any choice of $\phi \in P$. 

An intuitive picture of $\tilde{M}$ is clear when $M = \Loc(M)$.  In this case, Proposition \ref{loc_equals_suploc} shows that there is a one-to-one correspondence
between $\tilde{M}$ and $M$.  The correspondence is that a point $p \in M$ is identified with the collection of all local operators at $p$.  This reveals a new approach
to bulk reconstruction from the boundary theory, somewhat similar in spirit to that of \cite{Engelhardt:2016wgb}, which we will explore below. 

Let us now make no assumptions about $\Loc(M)$ and determine the general structure of $\tilde{M}$.  What we are going to find is that $\tilde{M}$
is equal to $\Loc(M)$  with the possible addition
of some extra points in $\tilde{M}$.  We refer to these unwanted extra points as ``clumps.''

First suppose that $p \in \Loc(M)$ and let $\phi$ be a local bulk operator at $p$.  Then, $[\phi]$ consists only of local operators at $p$. (This follows
directly from the definition of the localizable region.)  As a consequence, a copy of $\Loc(M)$  can always be identified in $\tilde{M}$.  Another thing
that we can immediately show is that if $\Phi$ is any superficially local operator whose support  consists of more than one point, then $\mathrm{supp \:} \Phi
 \cap \Loc(M) = \emptyset$.  To see, this, suppose that $p \in \mathrm{supp \:} \Phi \cap \Loc(M) $ and consider a local operator $\phi$ at $p$.  We would
 then have $[\Phi] \leq [\phi]$ with $\Phi$ superficially local so  $[\Phi] = [\phi]$.  This contradicts the definition of $\Loc(M)$ since $\Phi$ is nonlocal.
 
 We cannot exclude the possibility that there exist nonlocal superficially local operators. To investigate this issue carefully, we introduce a map
 $C$ that sends a point $P$ in $\tilde{M}$ to a subset of $M$ as follows:
 \[
 C(P) = \bigcup_{\Phi \in P} \mathrm{supp \:} \Phi.
 \]
 $C$ has some nice properties:
 \begin{prop}
 \label{C_properties}
 Suppose that $P$ and $Q$ are elements of $\tilde{M}$.  Then,
 \begin{enumerate}
 \item If every element of $P$ is a local bulk operator, then there exists a point $p \in \Loc(M)$ such that $C(P) = \{p\}$,
 \item if $P$ contains a nonlocal operator, then $C(P) \cap \Loc(M) = \emptyset$,
 \item if $C(P) \cap C(Q) \neq \emptyset$, then $P=Q$ and, in particular, $C$ is injective.
 \end{enumerate}
 \end{prop}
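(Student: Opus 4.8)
The plan is to derive all three statements from a single recurring observation, combined with the geometric characterization $R \in \mathcal{Q}(\phi)$ iff $\mathrm{supp}\,\phi \subseteq \EW(R)$, Proposition \ref{Q_properties}, the maximality built into the definition of superficial locality, and the definition of $\Loc(M)$. The recurring move is this: whenever a bulk point $r$ lies in $\mathrm{supp}\,\Phi$ for a superficially local $\Phi$, a genuinely local operator $\phi_r$ at $r$ satisfies $\mathrm{supp}\,\phi_r = \{r\} \subseteq \mathrm{supp}\,\Phi$, so $\mathcal{Q}(\Phi) \subseteq \mathcal{Q}(\phi_r)$ by Proposition \ref{Q_properties}, i.e. $[\Phi] \le [\phi_r]$. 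Since $\Phi$ is superficially local and $\phi_r$ is nontrivial (a nonzero local operator is not reconstructable in every region, so $[\phi_r] \neq [1_G]$), I conclude $[\Phi] = [\phi_r]$. Each of the three statements is a short corollary of this, the last being the only one that needs real construction.

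I would prove the third and second statements first, since they are the cleanest applications of the move. For statement three, if $r \in C(P) \cap C(Q)$, I pick $\Phi \in P$ and $\Psi \in Q$ with $r$ in both supports and a local operator $\phi_r$ at $r$; the move gives $[\Phi] = [\phi_r] = [\Psi]$, hence $P = Q$. Injectivity then follows because every $C(P)$ is nonempty (a superficially local operator has $[\phi] \neq [1_G]$ and hence nonempty support), so $C(P) = C(Q)$ forces a nonempty overlap. For statement two, I argue by contradiction: if $p \in C(P) \cap \Loc(M)$ while $P$ contains a nonlocal operator $\Phi_0$, then some $\Phi \in P$ has $p \in \mathrm{supp}\,\Phi$, and applying the move to a local $\phi_p$ at $p$ yields $[\Phi_0] = [\Phi] = [\phi_p]$ with $\mathrm{supp}\,\phi_p = \{p\}$ and $p \in \Loc(M)$. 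The second clause in the definition of $\Loc(M)$ then forces $\mathrm{supp}\,\Phi_0 = \{p\}$, contradicting the nonlocality of $\Phi_0$.

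The first statement is where the work lies. Given that every element of $P$ is local, I pick $\Phi \in P$ local at $p$, and the crux is to rule out any second point. If some $\Phi' \in P$ were local at $q \neq p$, I would form $\Psi := \Phi + \Phi'$: it is reconstructable in every region of $\mathcal{Q}(\Phi) \cap \mathcal{Q}(\Phi') = \mathcal{Q}(P)$ because a sum of reconstructions is a reconstruction, so $[\Psi] \ge [\Phi] = P$. Superficial locality of $\Phi$ then forces $[\Psi] \in \{P, [1_G]\}$, and choosing $\Phi, \Phi'$ so that $\Psi$ is neither proportional to the identity nor reconstructable everywhere rules out $[1_G]$, giving $[\Psi] = P$, hence $\Psi \in P$. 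But $\Psi$ is supported at both $p$ and $q$ and is therefore nonlocal, contradicting the hypothesis. Thus all local operators in $P$ have support $\{p\}$ and $C(P) = \{p\}$. Finally I would verify $p \in \Loc(M)$ straight from the definition: any $\psi$ with $\mathrm{supp}\,\psi = \{p\}$ has $\mathcal{Q}(\psi) = \mathcal{Q}(\Phi)$ by Proposition \ref{Q_properties}, so $[\psi] = P$ is superficially local (clause one), and any $\psi'$ with $[\psi'] = [\psi] = P$ lies in $P$, hence is local and by the previous step supported at $p$ (clause two).

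The main obstacle is the support bookkeeping in the first statement: I must guarantee that $\Psi = \Phi + \Phi'$ is genuinely nonlocal rather than accidentally collapsing to a single-point or trivial support, whether through outright cancellation (impossible here since $p \neq q$) or through more subtle cancellations when $p$ and $q$ are causally related. I would handle this by choosing the representative local operators generically, for instance Hermitian fields of vanishing one-point function, so that $\mathrm{supp}\,\Psi$ provably contains both $p$ and $q$ via a local operator near $p$ that fails to commute with $\Phi$ but commutes with $\Phi'$, and conversely near $q$. Throughout I rely on the infinite-$N$ assumption of this section, under which sums and products of low-energy local operators remain well-defined operators on $G$ with the expected supports.
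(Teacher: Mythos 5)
Your proof is correct and follows essentially the same route as the paper's: the same ``local operator at a point of the support dominates, then superficial locality forces equality of classes'' move for statements 2 and 3, and the same linear-combination trick $\Phi + \Phi'$ for statement 1. The only difference is that you are slightly more careful than the paper in ruling out the $[\Psi]=[1_G]$ branch and possible support cancellations in the sum, which the paper's proof passes over silently.
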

 \begin{proof}
 1. If $P$ consists of only local operators, then all of those operators must be at the same bulk point. To see this,
 suppose that $\phi_1$ and $\phi_ 2$ are two local bulk operators at bulk points $p_1$ and $p_2$ respectively.
 Now $\mathcal{Q}(\phi_1) = \mathcal{Q}(\phi_2)$ so any linear combination $\alpha \phi_1 + \beta \phi_2$ must satisfy
 $[\phi_1] \leq [\alpha \phi_1 + \beta \phi_2]$.  The superficial locality of $\phi_1$ now proves that $[\phi_1] = [\alpha \phi_1 + \beta \phi_2]$
 which contradicts the assumption that $P$ consists only of local operators unless $p_1 = p_2$.  Now let $p$ denote the
 unique point in $M$ where the elements of $P$ are supported.  It is obvious now that $C(P) = \{p\}$.  Moreover, every local
 operator at $p$ must lie in $P$ and since there are no operators in $P$ with support beyond $\{p\}$ we conclude that $p \in \Loc(M)$. 

 2. Assume that $P$ contains a nonlocal bulk operator and suppose that $q \in C(P) \cap \Loc(M)$.  Let $\phi$ denote a local operator at $q$.
 There must be some operator $\Phi \in P$ with $q \in \mathrm{supp \:} \Phi$ so $[\Phi] \leq [\phi]$ from which the superficial locality of $\Phi$ 
 implies that $[\Phi] = [\phi]$ which is equivalent to the statement that $[\phi] \in P$.  But this means that $\phi$, a local operator in $\Loc(M)$,
 is equivalent to a nonlocal operator.  This is a contradiction.
 
 3.  Suppose that there exists a bulk point $x \in  C(P) \cap C(Q)$.  Let $\phi$ denote a local operator at $x$.  An argument identical to what was given
 for the proof of statement 2 shows that $\phi \in P$ and $\phi \in Q$.  But $P$ and $Q$ are equivalence classes so the fact that they share
 an element means that $P=Q$.
 \end{proof}

 This argument shows that $\tilde{M}$ can be thought of as the union of  $\Loc(M)$ with some extra points.  Each extra point $P$ has the property
 that $C(P)$ is a subset of $M$ with more than one element.  These objects are subtle enough to deserve a name:
 
 \begin{definition}
 Suppose that $P \in \tilde{M}$ has the property that $C(P)$ has more than one element.  Then, we will
 call both $P$ and $C(P)$ a \emph{clump}. 
 \end{definition}
 
 \noindent Clumps are somewhat problematic because both local and nonlocal operators in clumps are 
 superficially local.  They therefore represent a potential threat to our approach.  However, there is good news:
  we will argue in section \ref{sec_clump_conj} that clumps can be identified
 and removed using only the boundary theory (e.g. without relying on concepts like the bulk support of operators).
 Roughly speaking, clumps are associated with phase transitions for holographic entanglement entropy,
 and such phase transitions are visible in the boundary. 
 
 We are now in a position to give a much stronger answer to the fundamental question posed at the beginning of this section about identifying the operators
 on $G$ that are dual to local operators in the bulk.

 \medskip
\begin{thm} 
If there are no clumps, an operator $\phi$ on the code subspace $G$ is dual to a local bulk operator in the localizable region if and only if $\phi$ is superficially local.
\end{thm}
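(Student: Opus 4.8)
The plan is to prove the two implications separately. The forward direction will follow almost immediately from the definition of $\Loc(M)$, while the reverse direction is where the no-clumps hypothesis does all the work. Throughout I will use the established properties of $\mathcal{Q}$ and of the map $C$, together with the observation that a superficially local operator is nontrivial by definition.

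First I would dispatch the easy direction. If $\phi$ is dual to a local bulk operator in the localizable region, this means $\mathrm{supp}\,\phi = \{p\}$ for some $p \in \Loc(M)$. But condition 1 in the definition of $\Loc(M)$ states precisely that any operator supported at a single point of $\Loc(M)$ is superficially local. Hence $\phi$ is superficially local, and notably this direction requires no appeal to the absence of clumps.

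For the converse, I would assume $\phi$ is superficially local and that there are no clumps, set $P = [\phi] \in \tilde{M}$, and argue in three steps. The key observation---and the one place the no-clumps hypothesis enters---is that $P$ can contain only local operators: if $P$ contained a nonlocal operator $\Phi$, then $\mathrm{supp}\,\Phi$ would have at least two points, forcing $C(P) \supseteq \mathrm{supp}\,\Phi$ to have more than one element and making $P$ a clump, contrary to assumption. With every element of $P$ local, Proposition \ref{C_properties}(1) applies and produces a point $p \in \Loc(M)$ with $C(P) = \{p\}$. Finally, since $\phi \in P$ we have $\mathrm{supp}\,\phi \subseteq C(P) = \{p\}$, and because $\phi$ is superficially local we have $[\phi] \neq [1_G]$, which rules out empty support (an operator of empty support commutes with everything, so $\mathcal{Q}(\phi) = \mathcal{R} = \mathcal{Q}(1_G)$ and $[\phi] = [1_G]$). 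Therefore $\mathrm{supp}\,\phi = \{p\}$ with $p \in \Loc(M)$, i.e.\ $\phi$ is local in the localizable region.

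I expect the only genuinely delicate point to be the bookkeeping around trivial support: one must confirm that superficial locality excludes the empty-support case, so that the single point furnished by Proposition \ref{C_properties}(1) is actually realized by $\phi$ itself rather than merely by some other member of its class. Everything else is a direct transcription of the clump definition and the properties of $C$ already assembled, so the statement is essentially a corollary of the preceding machinery.
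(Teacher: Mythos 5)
Your proof is correct and follows essentially the same route as the paper, whose (implicit) argument for this theorem is exactly the preceding structure result that $\tilde{M}$ is $\Loc(M)$ together with clumps: the forward direction is immediate from condition 1 in the definition of $\Loc(M)$, and the converse is the no-clump hypothesis combined with Proposition \ref{C_properties}(1). Your explicit handling of the empty-support case is a harmless (and slightly more careful) refinement of what the paper leaves implicit.
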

  
 \noindent If we assume the clump conjecture of section \ref{sec_clump_conj}, which provides a way to identify and eliminate clumps, this conclusion provides the boundary dual to the concept
 of a bulk local operator (within a certain region of the bulk).

\subsection{Examples}
 \label{sec_examples}
 
Examples can greatly clarify the machinery we have been developing.  In particular, the spacetimes below demonstrate several features:
 \begin{itemize}
 \item Despite being associated with HRT surfaces, $\Loc(M)$ can probe entanglement shadows.
 \item $\Loc(M)$ can intersect black hole interiors (but it does not extend arbitrarily closely to spacelike singularities).
 \item In regions that are close to spacelike singularities, local operators are not superficially local.
 \item Clumps can occur, but the only known examples are associated with phase transitions where extremal surfaces ``jump'' around them.
 \end{itemize}
 
 \begin{center}
 
  \emph{Vacuum AdS}
 
 \end{center}
 
 \noindent The simplest example is when $M$ is vacuum AdS space (or any small perturbation of vacuum AdS) with dimension $D \geq 2+1$.
 For any point $p\in M$, theorem \ref{Loc_iff} immediately shows that $p \in \Loc(M)$.  This is because in AdS space, one can always construct
 $D-1$ codimension $2$ stationary surfaces intersecting $p$, whose tangent spaces at $p$ are pairwise distinct,
 and then find the corresponding boundary regions $R_1, \ldots R_{D-1}$ on which these stationary surfaces are anchored.  To prove that $p \in \Loc(M)$, we then consider the collection
 of regions $\{R_1, \ldots R_{D-1}, \bar{R}_1 , \ldots , \bar{ R}_{D-1} \}$  and apply this set to theorem \ref{Loc_iff}.

\emph{Conclusion}: If we somehow know that $G$ is dual to a spacetime close to vacuum AdS, then an operator on $G$ is local if and only if it is superficially local.  The space of classes of superficially local operators, $\tilde{M}$, 
is a reconstruction of the bulk.
 
\begin{center}
 \emph{Conical AdS}

 \end{center}

\begin{figure}
\centering
\includegraphics[width=8cm]{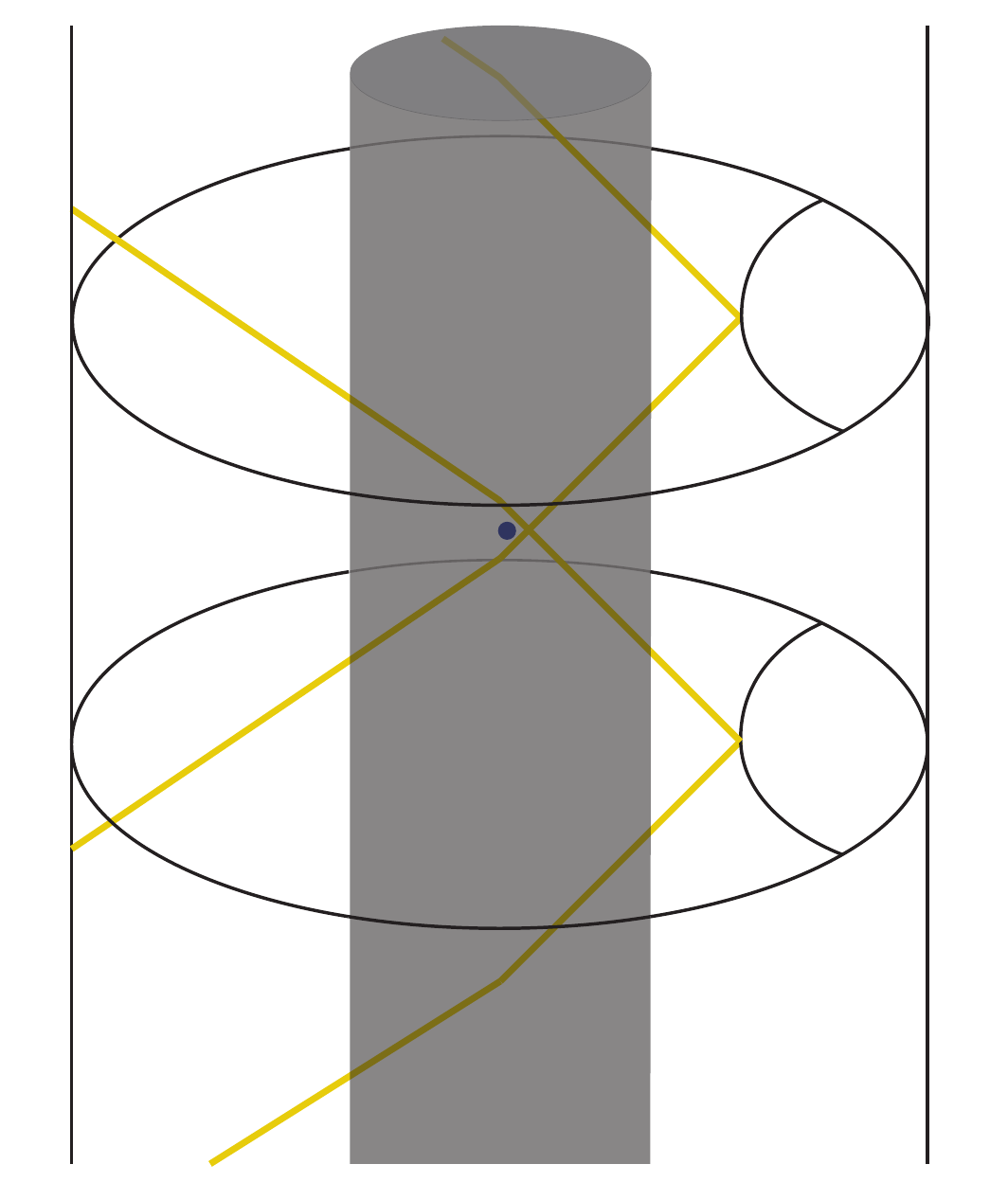}
\caption{Conical AdS is an example of how points in the bulk that are not directly probed by extremal surfaces can still be in the localizable region. Despite the entanglement shadow (the grey cylinder),
points can be localized because they can intersect boundaries of entanglement wedges. }
\label{fig_conical}
\end{figure}

\noindent Anti-de Sitter space with a conical deficit is a simple example of a spacetime with an entanglement shadow\footnote{To our knowledge, \cite{Balasubramanian:2014sra} and related work has
only studied regions that are not probed by minimal surfaces  anchored to static boundary regions rather than the general stationary surfaces appearing in the calculation of covariant holographic entanglement entropy.  Below 
we assume that the general features of the entanglement shadow in standard examples are unchanged if non-static surfaces are considered.} \cite{Balasubramanian:2014sra}.  Given that $\Loc(M)$ can be defined by means of
HRT surfaces, one might suspect that for conical AdS, $\Loc(M)$ is a proper subset of $M$. We will explain why  this is not the case and that, in fact, we again have $\Loc(M) = M$.

Let $n$ be an integer greater than $1$ and consider, for example, $M = \mathrm{AdS}_{2+1} / \bold{Z}_n$.  The metric can be written as
\begin{equation}
ds^2 = - \left({1 \over n^2} + {r^2 \over L^2} \right) dt^2 +  \left({1 \over n^2} + {r^2 \over L^2} \right)^{-1} dr^2 + r^2 d \phi^2
\end{equation}
where $-\infty < t < \infty$, $r>0$, and $\phi \in [0, 2\pi)$.  There is a critical radius $r_\mathrm{crit}$ such that
no HRT surface intersects the region $r<r_\mathrm{crit}$.  If $\{R_s\}$ is a continuous nested family of
boundary regions with $R_{-1}$ a small region and $R_1$ wrapping around almost the entire boundary, the HRT surface anchored to $R_s$, $\ext R_s$ , will discontinuously
jump around the shadow at some critical value of $s$.    Note that this phenomenon is not related to extremal surface barriers \cite{Engelhardt:2013tra} but is instead
a consequence of there being more than one stationary codimension 2 surface anchored to any given boundary region: no HRT surface enters the shadow because there would always be another
stationary surface that does not enter the shadow with smaller area.  The discontinuous jump can be regarded as a phase transition in the sense that the von Neumann entropy $S(R_s)$, regarded
as a function of the parameter $s$,  has a discontinuous derivative at the jump.

If $p \in M$ lies outside of the entanglement shadow, we must have $p \in \Loc(M)$ for the same reason that every point is localized in vacuum AdS.  On the other hand, suppose that $p$ lies within the entanglement
shadow.  To show that  $p \in \Loc(M)$, all we need, by theorem \ref{Loc_iff}, is a finite set of boundary regions such that the intersection of their entanglement wedges is $\{p\}$.

This can by done by considering regions like those shown in figure \ref{fig_conical}.  Note that only two regions are shown in the figure but that the point can be completely localized by
adding other boundary regions such as rotations of the regions depicted.  The trick here is easy to understand: it is not necessary for HRT surfaces to intersect localized
points as long as boundaries of entanglement wedges intersect them instead.

\emph{Conclusion}: If $G$ is dual to a spacetime close to $\mathrm{AdS}_{2+1} / \bold{Z}_n$, then an operator on $G$ is local if and only if it is superficially local. The space of classes of superficially local operators, $\tilde{M}$, is a 
reconstruction of the bulk.

\begin{figure}
\centering    
\subfigure{\label{fig:a}\includegraphics[width=70mm]{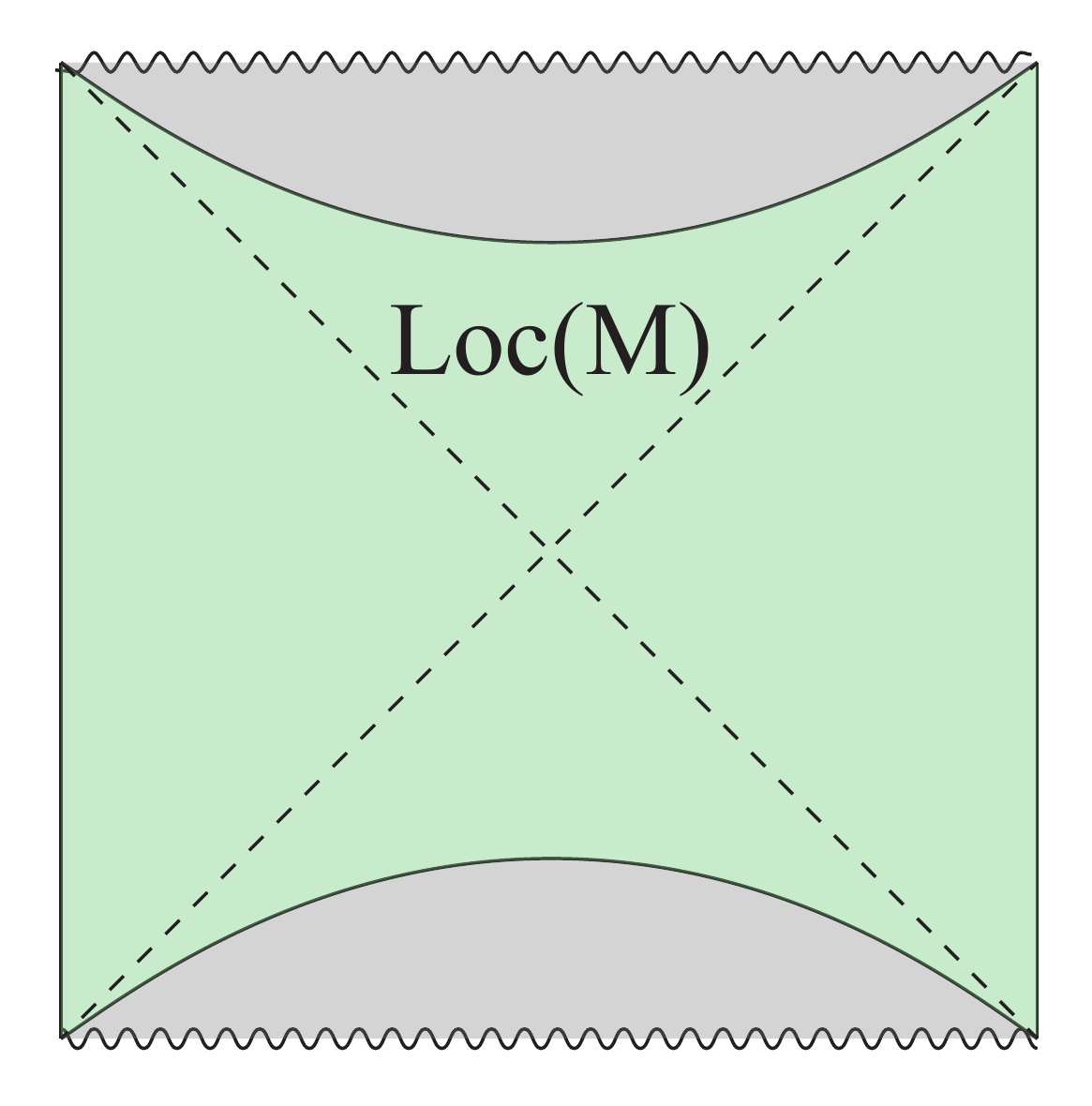}}
\subfigure{\label{fig:b}\includegraphics[width=70mm]{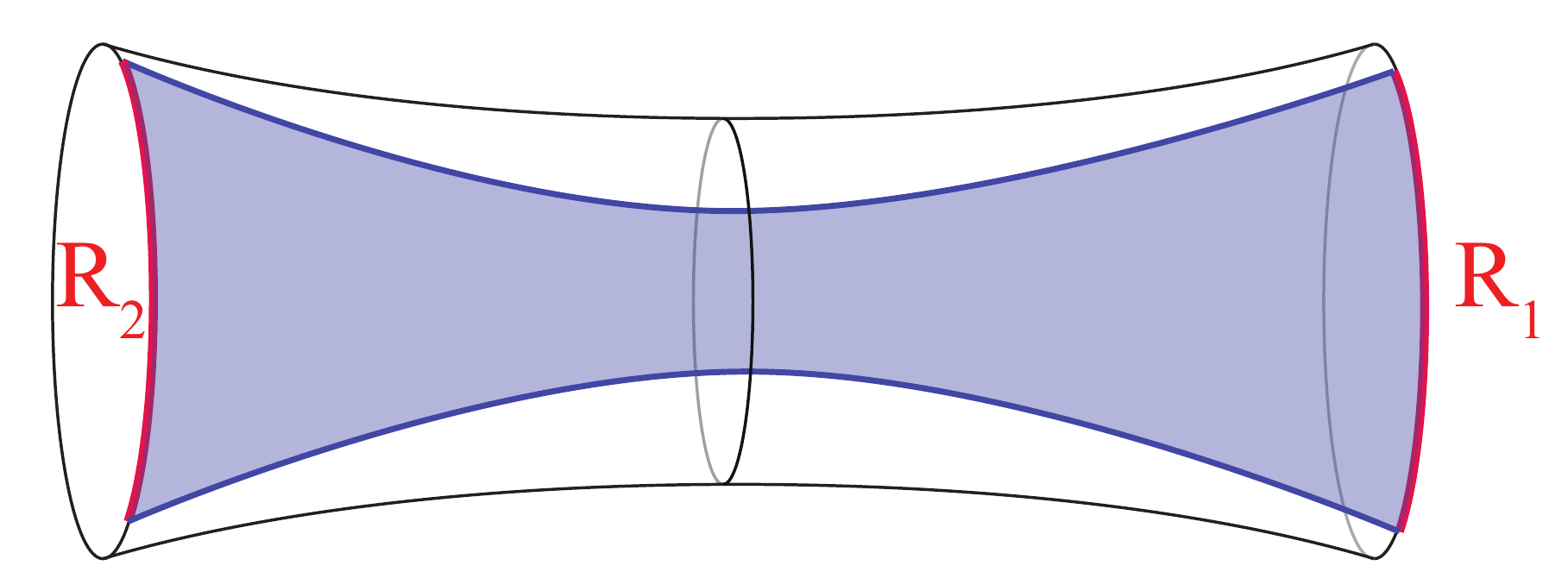}}
\caption{}
\label{fig_eternal}
\end{figure}

\begin{center}
 
\emph{Two-Sided Black Holes}

\end{center}

\noindent In the case where $M$ is an eternal AdS-Schwarzschild geometry, which has two disconnected boundary components, the localizable region extends into the black hole interior but  does not probe all the way to the singularity.  
This is depicted in figure \ref{fig_eternal}.  Many points in the interior region can be localized by considering boundary regions that consist of two disconnected components lying in different boundaries (see figure \ref{fig_eternal}).  HRT surfaces, however, do not  reach points that are arbitrarily close the future or past singularities: there is a critical radius $r_\mathrm{crit}$ (smaller than the black hole radius) that no 
boundary-anchored extremal surface extends beyond \cite{Wall:2012uf,Engelhardt:2013tra}.  Figure \ref{fig_singularity} proves that local operators at points with radius $r< r_\mathrm{crit}$ are not superficially local.  
This portion of the spacetime is completely missed by our methods and will thus be called the \emph{inaccessible region}.

\emph{Conclusion}: If $G$ is dual to an eternal AdS-Schwarzschild geometry (with two boundary CFTs), then an operator $\phi$ on $G$ is superficially local if and only if it is dual to a local bulk operator at a bulk point with $r>r_\mathrm{crit}$.
The space of classes of superficially local operators, $\tilde{M}$, is a reconstruction of the region of $M$ with $r>r_\mathrm{crit}$.

\begin{center}
 \emph{Dynamical Black Holes}

 \end{center}

 \noindent The previous example might have given the impression that $\Loc(M)$ cannot intersect a black hole interior without appealing to entanglement between two CFTs.  This is not the case. Consider a black hole that forms from
 collapse in an asymptotically AdS spacetime.  Then, it has been demonstrated \cite{Hubeny:2013dea} that HRT surfaces probe the black hole interior (although they do not
 approach the singularity arbitrarily closely).  Because such HRT surfaces can be anchored to boundary regions at a variety of angular positions,
 we conclude that $\Loc(M)$ enters the black hole interior in this case.  Note, however, that figure \ref{fig_singularity} again explains why regions too close to the singularity  are not localizable.
 
 \begin{figure}
\centering
\includegraphics[width=8cm]{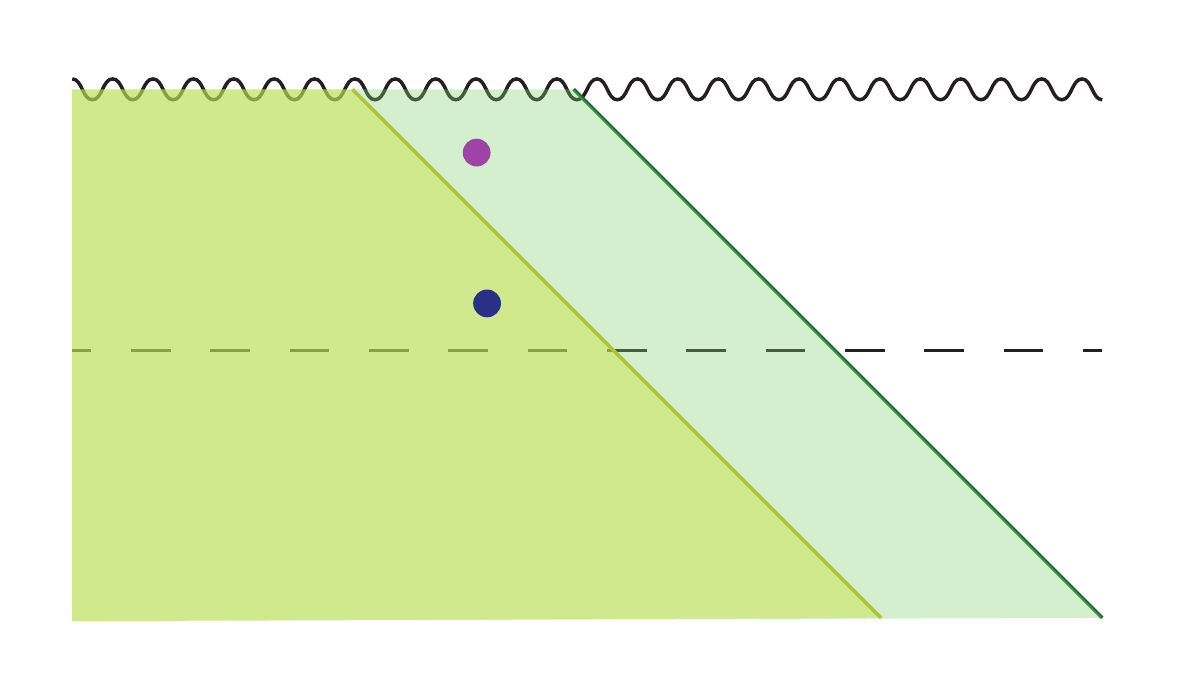}
\caption{
When a point (purple) is close to a spacelike singularity, it is very difficult for the point to be in $\Loc(M)$.  Quite generally, HRT surfaces are prevented from approaching such singularities
 \cite{Wall:2012uf,Engelhardt:2013tra}.  In this figure, the horizontal dashed line is a surface with the property that no HRT surface intersects its future. (This is more restrictive than an
 extremal surface barrier, which would prohibit smooth deformations of stationary surfaces.)
  A local operator at the purple point cannot be superficially local since
a point in its past (blue) will typically be contained in strictly more entanglement wedges.
}
\label{fig_singularity}
\end{figure}

\begin{figure}
\centering    
\subfigure{\label{fig:a}\includegraphics[width=70mm]{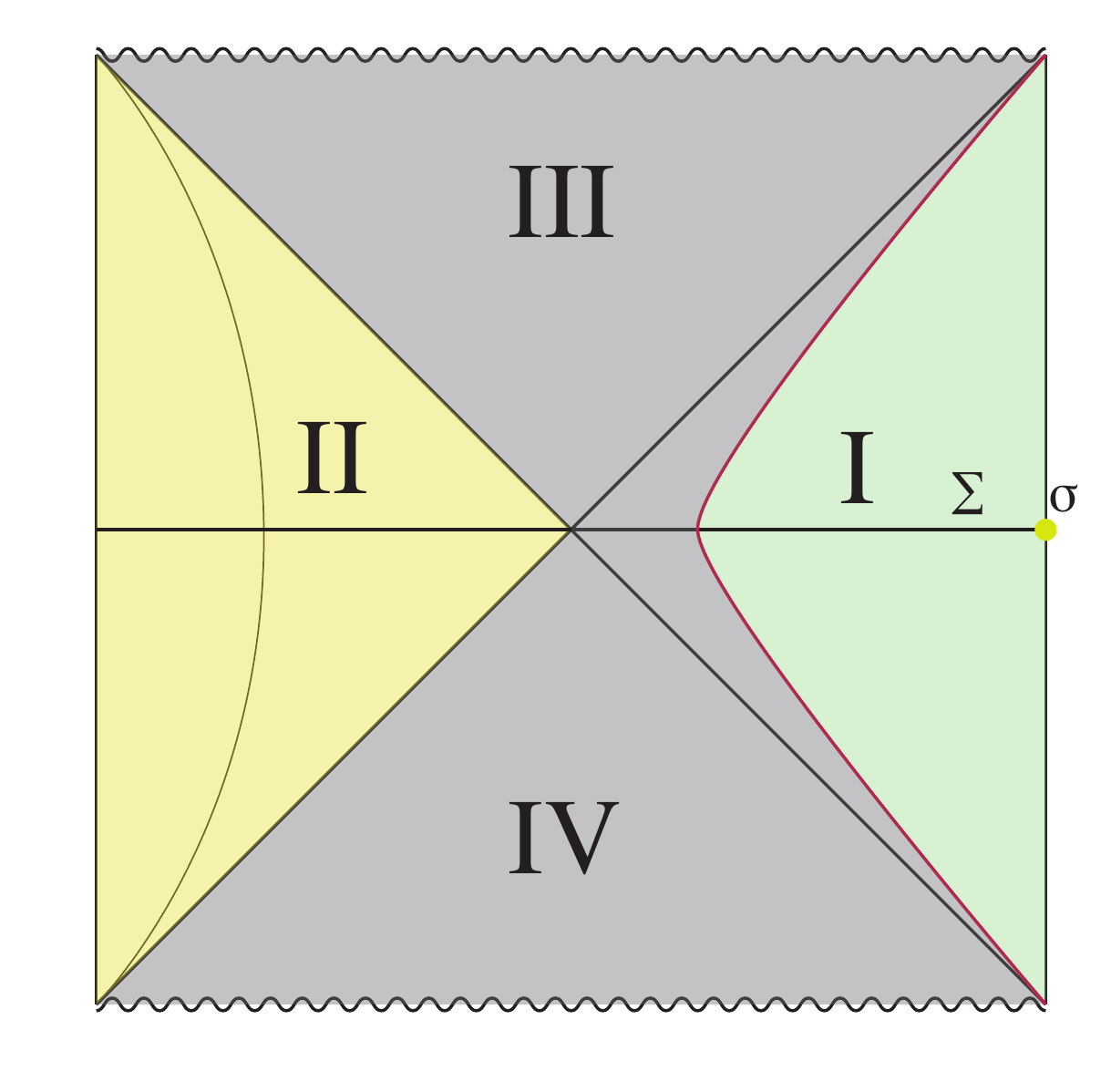}}
\subfigure{\label{fig:b}\includegraphics[width=70mm]{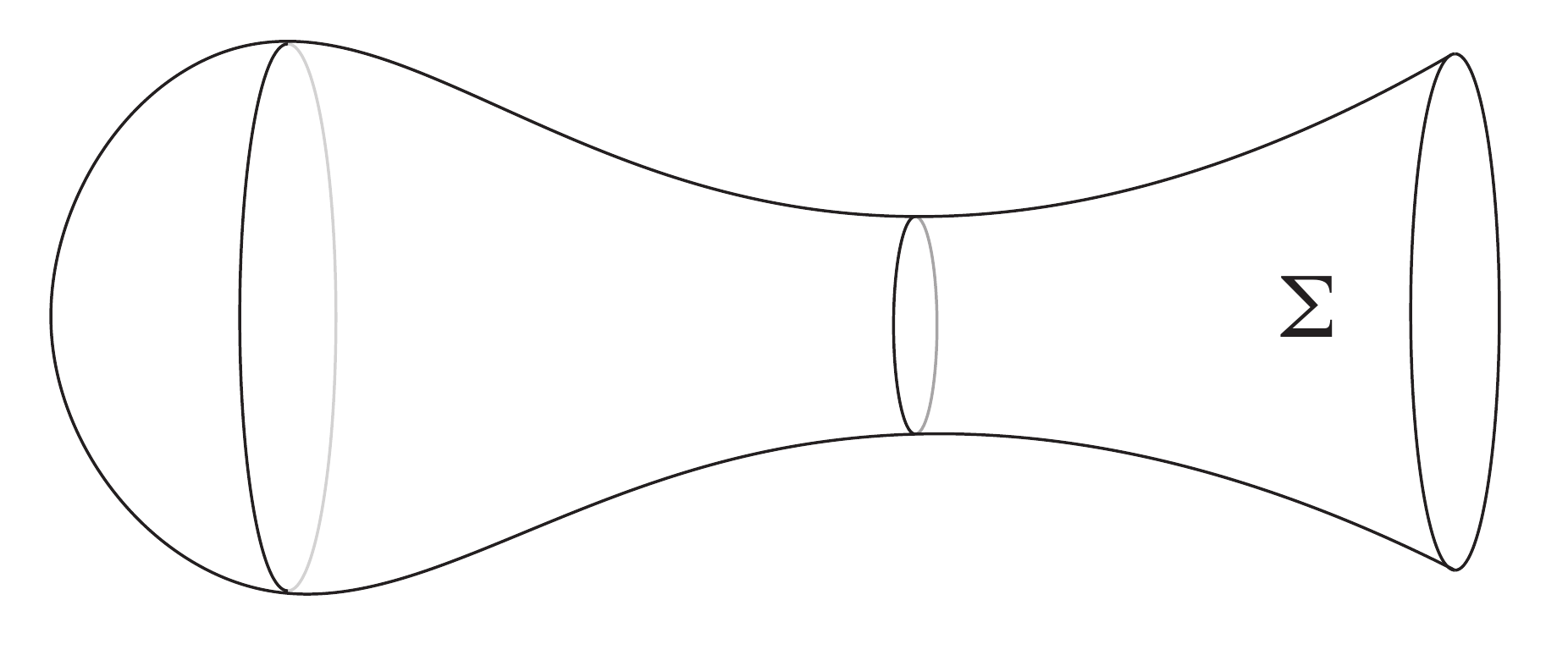}}
\caption{The bag of gold geometry we consider is obtained by removing an asymptotic region from an AdS black hole and replacing it with a patch of de Sitter space.  As discussed in the text, the localizable region  is the portion of region I
that is accessible to HRT surfaces and region II is a single clump.
The remaining portion of the spacetime is ``inaccessible'' in the sense that no operator with support in these regions is superficially local.}
\label{fig_bog}
\end{figure}
%

\begin{center}
 
\emph{Bag of Gold}

\end{center}

\noindent Our fourth example is a ``bag of gold'' spacetime (see, e.g., \cite{Marolf:2008tx}).  The manifold $M$ is an AdS-Schwarzschild spacetime with one of its two asymptotic regions removed and replaced with a patch of de Sitter space.
 The spacetime is static and spherically symmetric.  Its Penrose diagram is shown in figure \ref{fig_bog}.  We will label the regions in the diagram I-IV as shown in the figure  (note that region II includes
 the de Sitter patch).
 It is very important to understand that unlike the two-side AdS-Schwarzschild spacetime, $M$ has only one asymptotic boundary with topology $S^{D-2} \times \bold{R}$.
 The time slice $\Sigma$ that is marked in figure \ref{fig_bog}  has the topology of $\bold{R^{D-1}}$.  In particular, $\Sigma$ is simply connected and the homology
 constraint for HRT surfaces will not play any interesting role here.  
 The dotted line in region I is a surface beyond which no HRT surface probes.

We will argue the following.  
\begin{enumerate}
\item $\Loc(M)$ is the portion of region I that is probed by HRT surfaces.
\item $\tilde{M}$ has a single clump  whose image under $C$ (see section \ref{sec_class_space}) is all of region II. Thus, we will say that region II is a clump.
\item The rest of the spacetime (including regions II and IV) is neither localizable nor within clumps. It is ``inaccessible.''
\end{enumerate}

First let us discuss why region II is a clump.  Like conical AdS, this spacetime exhibits phase transitions in its HRT surfaces as well as an entanglement shadow.
Consider the boundary time slice $\sigma = \partial \Sigma$ and let $R_\psi$ be a spherical cap on $\sigma$ with opening angle $\psi$ (defined so that $R_\pi = \sigma$).
The spacetime in region I is identical to region I of AdS-Schwarzschild  so the structure of stationary codimension 2 boundary-anchored surfaces must also be the same and, in particular, there are always two distinct stationary surfaces anchored to $R_\psi$.  At $\psi = \pi/2$, there is a phase transition with a discontinuity in
the first derivative of $S(R_\psi)$.  At this transition, the minimal surface jumps around the entire region II.  Note also that HRT surfaces fail to even contact the bifurcation throat: there is, once again,
a minimal radius in region I, $r_\mathrm{crit}$, greater than the black hole radius, within which no HRT surface extends.

If $\psi < \pi/2$, the spatial region $V_\psi$ on $\Sigma$ between $R_\psi$ and its HRT surface $\ext R_\psi$  is confined to region I. Thus $\EW(R_\psi)$ is confined to
region I; this follows from the fact that $\EW(R_\psi)= D(V_\psi)$ after compactification.  Meanwhile, When $\psi> \pi/2$, $V_\psi$ contains the entire intersection of $\Sigma$ with region II
and $\EW(R_\psi)$  must contain all of region II.
These observations were made for a simple spherical cap on the time-reversal symmetric slice $\Sigma$, but they hold very generally: any time we consider a nested family of boundary regions
$\{ R_s \in \mathcal{R} \}$, $\EW(R_s)$ is confined to region I for $s$ smaller than some critical value and $\EW(R_s)$ contains all of region II when $s$ exceeds this value.

What this shows is that if $\phi_1$ and $\phi_2$ are two bulk operators with support in region II, we must have $\mathcal{Q}(\phi_1) = \mathcal{Q}(\phi_2)$.  
Moreover, note that any operator $\phi$ which is supported in region II must be superficially local.  To see this, consider any $x \in M \setminus (\mathrm{\: region \: II})$.
If $x$ is in region I or III, take a spherical cap like $R_{\psi}$ with $\psi>\pi/2$, but place it on a boundary time slice at very early time.  No matter how early time time is taken,
time-translation invariance guarantees that region II is still contained in $\EW(R_\psi)$, but by sending the boundary time slice to $-\infty$, we can put any point in regions I or III in the future
of $\ext R_\psi$.  This means that there exists some $R \in \mathcal{Q}(\phi) \setminus \mathcal{Q}(\phi^\prime)$ so $\phi \nleq \phi^\prime$.  The same argument can be made if $x$ is in region IV
by sending the boundary time slice to $+\infty$.   We conclude that $\phi$ must be superficially local and region II is thus a clump (since all operators with support in region II are superficially local
and have the same image under $\mathcal{Q}$).

%
%

Let us finally study the remainder of the spacetime.  The portion of region I that is probed by HRT surfaces is readily seen to be contained in $\Loc(M)$.  We now outline an argument that, in fact, this probed region is exactly $\Loc(M)$.
Figure \ref{fig_singularity} gives an explanation of why local operators in region III cannot be localized.  More generally,
consider a local bulk operator $\phi_x$ at a point $x$ lying outside of the region probed by HRT surfaces but also lying outside of the clumped region II.  If $R$ is a boundary region with $x \in \EW(R)$, then $R$ must be large
enough to have undergone a phase transition so that region II is contained in the entanglement wedge of $R$ as well.  This means that if $\Phi$ is any superficially local operator in the clump, we have $\mathcal{Q}(\phi_x) \subsetneq
\mathcal{Q}(\Phi)$.    This shows that $\phi_x$ cannot be superficially local.

%
%

 \emph{Conclusion}: Suppose that $G$ is dual to the bag of gold geometry.  If $\phi$ is a superficially local operator, then it is either a local operator in the portion of region I probed by extremal surfaces or it is
 some operator (which need not be local) with support
 in region II.  The clump conjecture of section \ref{sec_clump_conj} is valid for this spacetime, so the problematic superficially local operators can be identified and discarded.
  After doing so, the remaining superficially local operators exactly form the collection of all bulk local operators in $\Loc(M)$.

%
%

\subsection{The Clump Conjecture\footnote{Section \ref{sec_clump_conj} presents material that significantly strengthens our conclusions.  However,
it may be distracting to focus on such technicalities and some may choose to pass over this section.}}

\label{sec_clump_conj}
In this section we propose a way to use the boundary theory to identify and remove clumps from $\tilde{M}$.  Specifically we give an alternative
definition of a clump that does not make direct reference to the bulk and we conjecture that our two definitions are equivalent.  We know of no counterexamples
to the conjecture and there is good evidence for its general validity.

The basic motivation is as follows.  If $P \in \tilde{M}$ is a clump, then, by definition, $C(P)$ contains more than one bulk point.  Generically, clumps have nonzero spacetime volume.
On the other hand, we know that no entanglement wedge can contain only part of a clump: if $R \in \mathcal{R}$, then either $C(P) \subseteq \EW(R)$ or $C(P) \cap (\EW(R))^\circ = \emptyset$. 
These observations indicate that if $R_s$ is a continuous nested one-parameter family of regions in $\mathcal{R}$ such that $R_s \in \mathcal{Q}(P)$ for $s>0$ and $R_s \notin \mathcal{Q}(P)$ when $s<0$,
we must have some form of a discontinuity in the entanglement wedges $\EW(R_s)$ as a function of $s$ at $s = 0$.  Such discontinuities occur when the HRT surfaces anchored to $\{R_s\}$ jump discontinuously.
But such a jump can often be seen in the boundary theory in the form of a discontinuity in a derivative of the von Neumann entropy of the boundary regions $R_s$.

Before stating the conjecture formally, we give a useful definition:

 \begin{definition}
Let $\phi$ be an operator on $G$ and $R \in \mathcal{Q}(\phi)$.
$R$ is said to be \emph{minimal} if whenever $R^\prime \subsetneq R$, $R \notin \mathcal{Q}(\phi)$.
 \end{definition}
 
 \noindent We will also introduce the map $\bar{\mathcal{Q}}$ by letting $\bar{\mathcal{Q}}(\phi)$ denote the collection of minimal elements of $\mathcal{Q}(\phi)$.
 Additionally, if $P \in \tilde{M}$, we will define $\bar{\mathcal{Q}}(P)$ as $\bar{\mathcal{Q}}(\phi)$ for any choice of $\phi \in P$ (all choices of $\phi$ have the same $\bar{\mathcal{Q}}(\phi)$).
 
 As suggested above, phase transitions in the boundary theory will play a role in the boundary identification of clumps.  To be clear, a ``phase transition'' refers to the following situation.
 Suppose that $\{R_s \big| -1<s<1 \}$ is a regular\footnote{By ``regular'' we mean that $R_s$ deforms smoothly enough that we are not introducing discontinuities in any derivative of von Neumann entropy
 by choosing an awkward parameterization of regions.} one-parameter
 family of boundary regions with $R_{s_1} \subsetneq R_{s_2}$ whenever $s_1 < s_2$.  Let $S(R_s)$ denote the von Neumann entropy of the boundary region $R_s$
 in any state\footnote{$S(R_s)$ is state-dependent, but the spacetime background is approximately fixed within the code subspace
 $G$, so assertions about phase transitions will be state-independent at leading order.} in the code subspace $G$.  We say that there is a phase transition at $s=0$
 if some derivative of $S(R_s)$ at $s=0$ is discontinuous.  Moreover, if $R \in \mathcal{R}$, we will say that there is a phase transition at $R$ if there is some one parameter
 deformation of the form above, $\{R_s\}$, with $R_0 = R$.

We now state our proposal for identifying and removing clumps.  We will refer to it as the \emph{clump conjecture}:
 
 \vspace{.4 cm}
\noindent \emph{Suppose that $P \in \tilde{M}$.  $P$ is a clump if and only if for every $R \in \bar{\mathcal{Q}}(P)$, there is a phase transition at $R$.}
 \vspace{.3 cm}
 
\noindent We immediately note that this conjecture is consistent with the examples provided in section \ref{sec_examples}.  
The only example we gave of a clump is that of the bag of gold spacetime which always features phase transitions for minimal regions.  Consider, however
the example of $\mathrm{AdS}_{2+1} / \bold{Z}_n$.  This may appear to contradict the clump conjecture because it is a spacetime with no clumps but which does posses 
phase transitions.  However, consider regions like the ones depicted in figure \ref{fig_conical}.  These are indeed minimal regions for the operator at the point depicted (which
corresponds to a point in $\tilde{M}$.  However,
there is no phase transition at such a region.  This is why the statement of the clump conjecture requires that there is a phase transition for \emph{every} $R \in \bar{\mathcal{Q}}(P)$.

\section{Reconstruction of Causal Structure and Beyond}
\label{sec_4}

From here on we assume the validity of the clump conjecture (which we strongly expect) and use a new definition of $\tilde{M}$:
\[
\tilde{M} = \left\{ [\phi]  \: \big| \: \phi \textrm{ is superficially local and } [\phi] \textrm{ is not a clump} \right\}
\]

\noindent This can be done using only the boundary theory. Simply
begin with $\tilde{M}$ as defined previously, and then remove clumps from it by using the clump conjecture.

With this new definition, a major conclusion of section \ref{sec_3} is that in some sense $\tilde{M}$ is isomorphic to $\Loc(M)$
although we have not been very clear about what sort of isomorphism this is.  We are now going
to take the view that $\tilde{M}$ can be thought of as a reconstruction of the bulk very seriously.  
We will successfully determine a metric on $\tilde{M}$ up to a conformal rescaling.  This will be done using only information
available in the boundary theory (which includes the definition of $\tilde{M}$ itself).  The manifold $\tilde{M}$ and its causal
structure will exactly reproduce that of $\Loc(M)$. This constitutes a boundary reconstruction of the metric on $\Loc(M)$ up to its conformal factor.

\subsection{Spacelike Separation and Microcausality}

The key insight to identifying a causal structure on $\tilde{M}$ is to note that $\tilde{M}$ consists of collections of operators on the code subspace $G$
and that the commutation relations amongst those operators must betray an aspect of the bulk spacetime geometry.  This suggests the following definition:

\begin{definition}
Suppose that $P,Q \in \tilde{M}$.  We say that $P$ and $Q$ are \emph{spacelike separated} if for every $\phi_1 \in P$ and  $\phi_2 \in Q$,  we have $[\phi_1,\phi_2] = 0$.
Otherwise, we say that $P$ and $Q$ are \emph{causally related}.
\end{definition}

There are two things to immediately notice about this definition.  First, while we have defined the statement that $P$ and $Q$ are causally related, we have not
yet  given meaning to the statement that $P$ is to the future of $Q$.  This will be addressed below.  Second, note that for $P$ and $Q$ to be causally related, all that is necessary is that
there exists some $\phi_1 \in P$ and some $\phi_2 \in Q$ such that $\phi_1$ and $\phi_2$ fail to commute.  It is certainly not necessary that all such operators
would fail to commute.

In special cases, it is possible to conclude that $P$ and $Q$ are spacelike separated without relying directly studying the commutativity of their operators.  If it happens that there exists
$R_1 \in \mathcal{Q} (P), R_2 \in \mathcal{Q}(Q)$ with the property that $R_1$ and $R_2$ are spacelike separated in the boundary, meaning that 
\[
\left( J_+^\partial(R_1) \cup  J_-^\partial(R_1)\right) \cap R_2 = \emptyset,
\]
then microcausality in the boundary field theory guarantees that any operators $O_1$ and $O_2$ in the algebras of $R_1$ and $R_2$ respectively must have $[O_1,O_2]=0$.  In particular,
for any  $\phi_1 \in P$ and  $\phi_2 \in Q$, we can find reconstructions of $\phi_1$ and $\phi_2$ in $R_1$ and $R_2$ respectively and conclude that $[\phi_1,\phi_2] = 0$.  However,
this situation is too much to ask for in general.  

In the case where two classes $P$ and $Q$ are causally related, the above logic indicates that there absolutely cannot be any 
$R_1 \in \mathcal{Q} (P), R_2 \in \mathcal{Q}(Q)$ with the property that $R_1$ and $R_2$ are spacelike separated in the boundary.
This is consistent with a theorem in bulk geometry which is a necessary result for the consistency of entanglement wedge reconstruction:
\begin{prop}
Let $M$ be an asymptotically AdS spacetime and suppose that $p,q \in M$ are bulk points with $q \in I_+(p)$.  Suppose, moreover, that there exist
boundary regions $R_1, R_2 \in \mathcal{R}$ such that $p \in \EW(R_1), q \in \EW(R_2)$. Then, $\left( I_+^\partial(R_1) \cup  I_-^\partial(R_1)\right) \cap R_2 \neq \emptyset$.
\end{prop}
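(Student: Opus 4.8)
The plan is to establish the contrapositive: if $R_1$ and $R_2$ are spacelike separated on the boundary, then $\EW(R_1)$ and $\EW(R_2)$ are spacelike separated in the bulk, which is incompatible with $q\in I_+(p)$ given $p\in\EW(R_1)$ and $q\in\EW(R_2)$. So I would assume $\left(I_+^\partial(R_1)\cup I_-^\partial(R_1)\right)\cap R_2=\emptyset$. Since $R_1$ and $R_2$ are then mutually achronal, the first step is to place them on a common boundary Cauchy slice as disjoint spatial regions, with $R_2$ contained in the spacelike complement $\bar{R}_1$ of $R_1$. Phrasing everything in terms of the open chronological sets $I_\pm^\partial$ is deliberate: it lets me discard null-separated edge cases, since the statement to be contradicted, $q\in I_+(p)$, is itself strict.

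The argument then rests on two facts of bulk geometry. First, by entanglement wedge nesting \cite{Wall:2012uf}, $R_2\subseteq\bar{R}_1$ gives $D^\partial(R_2)\subseteq D^\partial(\bar{R}_1)$ and hence $\EW(R_2)\subseteq\EW(\bar{R}_1)$. Second, the complementary wedges $\EW(R_1)$ and $\EW(\bar{R}_1)$ are spacelike separated. To see the latter, pick a bulk Cauchy slice $\Sigma$ on which the homology region $S_1$ and its complement $\bar{S}_1$ meet only along $\ext R_1$, so that $\EW(R_1)=D(S_1)$ and $\EW(\bar{R}_1)=D(\bar{S}_1)$. A future-directed timelike curve from $x\in D(S_1)$ to $y\in D(\bar{S}_1)$ crosses $\Sigma$ once, at a point forced to lie in $S_1$ (because the future of $x\in D^-(S_1)$ meets $\Sigma$ inside $S_1$) and in $\bar{S}_1$ (because the past of $y\in D^+(\bar{S}_1)$ meets $\Sigma$ inside $\bar{S}_1$), hence on $\ext R_1$. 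But then $y\in I_+(\ext R_1)$, which is disjoint from $D^+(\bar{S}_1)$, a contradiction. Combining the two facts, $\EW(R_1)$ and $\EW(R_2)$ are spacelike separated.

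The proof then closes at once: $p\in\EW(R_1)$ and $q\in\EW(R_2)$ would be spacelike separated, so $q\notin I_+(p)$, contradicting the hypothesis. Therefore the initial assumption fails and $\left(I_+^\partial(R_1)\cup I_-^\partial(R_1)\right)\cap R_2\neq\emptyset$, as claimed.

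I expect the main obstacle to be rigor in the covariant, time-dependent setting rather than the logical skeleton. The two inputs I lean on---entanglement wedge nesting and the spacelike separation of complementary wedges---are themselves nontrivial, and their clean justifications use the maximin characterization of HRT surfaces \cite{Wall:2012uf}; asserting them when there is no canonical slice demands care, for instance in guaranteeing a single $\Sigma$ adapted simultaneously to $R_1$ and $\bar{R}_1$. The secondary subtlety is the very first step---constructing a common boundary Cauchy slice with $R_2\subseteq\bar{R}_1$---which is immediate for strictly spacelike regions but requires attention when $R_1$ and $R_2$ approach null separation or share limit points, precisely the situation the chronological formulation is designed to sidestep.
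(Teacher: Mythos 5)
Your proposal is correct and follows essentially the same route as the paper: both arguments combine Wall's entanglement wedge nesting theorem with the complementarity $\EW(R_1)=D(S)$, $\EW(\bar{R}_1)=D(\bar{S})$ to exclude $R_2$ from the boundary causal complement of $R_1$, the only difference being cosmetic (you phrase it as a contrapositive and build a common Cauchy slice containing $R_1\cup R_2$, where the paper simply notes $R_2\subseteq D^\partial(\bar R_1)=\partial M\setminus\left(I_+^\partial(R_1)\cup I_-^\partial(R_1)\right)$ by flatness of the boundary). Your explicit Cauchy-slice crossing argument for the spacelike separation of complementary wedges is a fact the paper asserts only parenthetically, so that part of your write-up is, if anything, more detailed than the original.
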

\begin{proof}

Choose a Cauchy surface $\sigma$ of $\partial M$ with $R_1 \subseteq \sigma$ and let $\bar{R}_1 = \sigma \setminus R_1$. 
Let $\Sigma$ be any AdS-Cauchy surface for the bulk with $\partial \Sigma = \sigma$ and write $\Sigma = S \cup \bar{S}$ where 
$S \cap \bar{S}$ is the HRT surface of $R_1$.  Then, $q \notin \EW(\bar{R}_1)$.  (This follows from
the fact that $\EW(R_1) = D(S)$ and $\EW(\bar{R}_1) = D(\bar{S})$.)

Suppose that we had $R_2 \subseteq D^\partial(\bar{R}_1)$.  Wall's entanglement wedge nesting theorem \cite{Wall:2012uf} implies that this would
require that $\EW(R_2) \subseteq \EW(\bar{R}_1)$ which contradicts the fact that $q \in \EW(R_2)$.  Thus, $R_2$ is not contained (entirely) in
$D^\partial(\bar{R}_1)$.  On the other hand, the boundary is flat  so $D (\bar{R}_1 )= \partial M  \setminus ( I_+^\partial (R_1) \cup I_-^\partial(R_1))$.
We conclude that $R_2$ intersects $ I_+^\partial (R_1) \cup I_-^\partial(R_1)$.
\end{proof}

\subsection{Time Orientation}
Suppose that $P$ and $Q$ are points in $\tilde{M}$ that are causally related.  Then, the corresponding bulk points, $p$ and $q$ respectively, must either have $p \in J_+(q)$ or 
$q \in J_+(p)$.  But how do we know which?  

There may be a very direct way to answer this question.  Here, however, we give a topological answer.  In appendix \ref{app2} we explain how $\tilde{M}$ be
be made into a topological space.  The basic idea is fairly obvious: two points in $\tilde{M}$ are close to each other if their images under $\mathcal{Q}$ are close.
Because this topology will be consistent with the bulk topology on $\Loc(M)$, we can make use of topological features of the causal structure of the spacetime $\Loc(M)$.

Of particular use is the fact that if $p \in M$, $J_+(p)$ is connected (as is $J_-(p)$).  Because $\Loc(M)$  may be a proper subset of $M$, it is possible that $J_+(p) \cap \Loc(M)$ is not
connected.  Nonetheless, we can consider the connected component of $J_+(p)\cap \Loc(M)$ that contains $p$.  The same construction must be possible in $\tilde{M}$,
but we have to be somewhat more careful.  For $P \in \tilde{M}$, we can consider the set of points $K$ that are causally related to $P$.  This includes $P$ itself.  We can then consider $K \setminus \{P\}$
and look at the two connected components of $K$ that are arbitrarily close to $P$. (There must be exactly two such components because the topology on $\tilde{M}$ needs to be consistent with
that of $\Loc(M)$.) We label these two components $\tilde{J}_\pm(P)$ with the understanding that we have yet to determine which component deserves a plus sign and which deserves a minus sign.

Suppose we arbitrarily choose which of the two regions is to be called $\tilde{J}_+(P_0)$ for one particular point $P_0$.  In all but the most pathological of connected spacetimes, this
fixes the time orientation for every other point in the spacetime.  For example, suppose that $P_1$ is another point in $\tilde{M}$ and  that  
$P_1\in \tilde{J}_+(P_0)$. (Here we are making use of our arbitrary decision about $\tilde{J}_+(P_0)$.)  Then, we must assign the orientation at $P_1$ so that  $P_0 \in \tilde{J}_-(P_1)$.
But now, if  we find another point $P_2 \in \tilde{J}_-(P_1)$, we must have that $\tilde{J}_+(P_2)$ contains $P_1$.  Continuing in this way, we can expect to be able to fix the time orientation for every point in $\tilde{M}$ as long
as it is connected.  This process is depicted in figure \ref{fig_orientation}

\begin{figure}
\centering
\includegraphics[width=8cm]{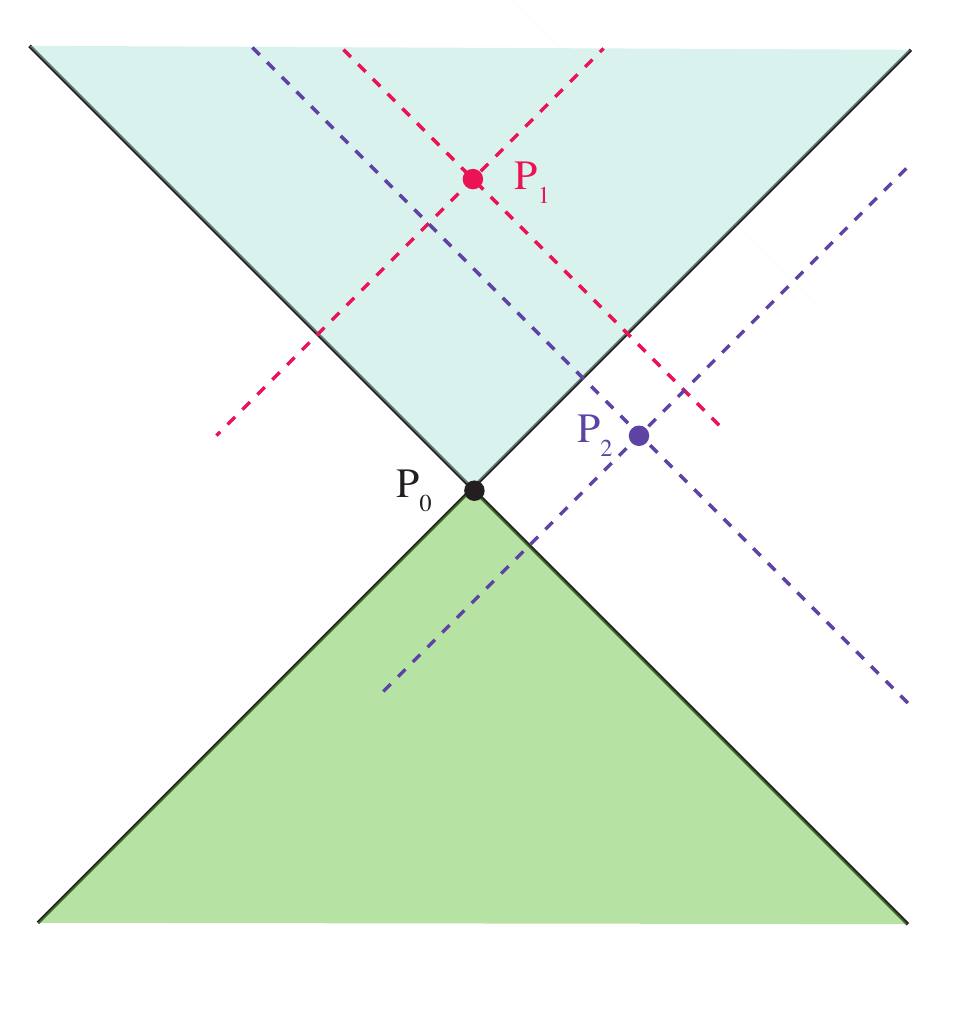}
\caption{
If the definition of the future and past of a point $P_0 \in \tilde{M}$ is chosen, there is an immediate constraint on the time orientation at other points in $\tilde{M}$.  In this figure, the orientation at $P_0$ also
fixes the orientation at $P_1$ and $P_2$.
}
\label{fig_orientation}
\end{figure}

But what about the overall time orientation?  That is, how do we decide on $\tilde{J}_+(P_0)$ in our example above?  This can be done by beginning with a point in $\tilde{M}$ that corresponds
to local boundary operators at some boundary point.  On $\partial M$, we already have a notion of future and past.  Thus, if we take $P_0$ to be an equivalence class consisting only of local boundary operators at a point
$x \in \partial M$, we can decide upon $\tilde{J}_+(P_0)$ by requiring that if $P_1$ is another class of local boundary operators lying at a point $y$ then  $P_1 \in \tilde{J}_+(P_0)$   only if
 $y \in J_+^\partial(x)$.

 We have now succeeded in defining a causal structure on $\tilde{M}$ that must be consistent with that on $\Loc(M)$.  As a consequence, we have reconstructed the metric
 in $\Loc(M)$ up to an undetermined conformal factor.

 \subsection{Comparison with Light-Cone Cut Reconstruction}

There is a compelling connection between the bulk reconstruction developed here and a recent approach to bulk reconstruction involving light-cone cuts due to
 Engelhardt and Horowitz  \cite{Engelhardt:2016wgb}.  Cut reconstruction is a new area of research \cite{Engelhardt:2016, Engelhardt:2016vdk,Engelhardt:2016crc,Engelhardt:2017lfb},
 and remarkably, a number of the ideas involving cuts appear to have analogs in superficial locality reconstruction.  We now detail the similarities and differences between the two approaches.

 
 \begin{itemize}
 
 \item \emph{Large $N$}:  Both cut reconstruction and reconstruction with superficial locality require in their current forms that the classical limit be taken. Light cone cuts are associated with singularities
 of correlation functions of local boundary operators that only resolve in the large $N$ limit.
 These singularities in the boundary theory are at first mysterious but have a simple explanation if one knows about the dual bulk: if there is a bulk point $p$ in the causal wedge of the boundary, then cut singularities are 
 singularities of boundary $n$-point functions $<O(x_1) \ldots O(x_n)>$ that can occur when the boundary points lie on
 the \emph{future and past cuts} of $p$: $C_\pm(p) =( \partial J_\pm(p)) \cap \partial M$.
  These singularities are generally known as bulk-point singularities and have been considered in several contexts prior to that of cuts \cite{Gary:2009ae,Maldacena:2015}.  In particular, \cite{Maldacena:2015} provided
  an example showing that such singularities are not expected to arise without sending $N$ to infinity.  This is consistent with the fact that there should not be any notion of a local bulk scattering point when $N$ is finite.
  Similarly, our consideration of superficially local operators and their equivalence classes is certainly only expected to reproduce local bulk physics in the large $N$ limit.  At finite $N$ there are no local (gauge-invariant)
 observables in the quantum gravity \cite{Donnelly:2015hta, Donnelly:2016}, so it is not clear why one would even seek to study any notion of exactly local bulk operators in this case.  It is, of course,
  interesting to contemplate whether or not either of these approaches suggests new ways to think about approximate locality at finite but large $N$.
   
 \item \emph{Specification of a state}: Cut reconstruction, in its original form, presupposes that we are given a particular quantum state $\psi$ in the CFT Hilbert space and that we are told that $\psi$ is dual to
 some $\emph{unknown}$ bulk geometry.\footnote{In \cite{Engelhardt:2017lfb}, the theory of cuts was put into a framework that did not strictly rely on the presumption of the existence of a bulk, but where an extra dimension can be seen
 to emerge in appropriate cases.} 
 The task is then to study correlation functions in that state (which can be done using the boundary theory only) to determine aspects of the bulk interpretation
 of $\psi$ (like the bulk geometry).
 Similarly, throughout this paper we have assumed that we are given a code subspace $G$ and that we are told that $G$ has the bulk interpretation of being the Hilbert space of
 a quantum field theory on some unknown spacetime background.  We then consider various operators acting on $G$ and ask which of them are superficially local (which can be done using the boundary theory only).

 \item \emph{Identification of points with a boundary object}: The next step in cut reconstruction is to make an identification between the set of light-cone cuts and the set of points in the causal
 wedge of the boundary.  On the other hand, here we identify points in $\Loc(M)$ with equivalence classes of superficially local operators. 
 
 \item \emph{Reconstruction of the Conformal Metric}: It is possible to assign a causal structure to the set of cuts.  This causal structure is consistent with the causal
 structure in the set of bulk points corresponding to the cuts (with some caveats that can be addressed).  As a result, the set of cuts provides a reconstruction of the metric in the causal wedge of the boundary,
 $\mathrm{CW}(\partial M)$, up to a conformal factor.  Similarly, we are able to identify a causal structure on $\tilde{M}$, the set of classes of superficially local operators, 
 and we  therefore obtain a reconstruction of the conformal class of the metric in the bulk region $\Loc(M)$.  It is known that in some cases, $\Loc(M)$ extends further 
 into the bulk than $\mathrm{CW}(\partial M)$ does: in the case of a dynamical black hole, $\Loc(M)$ can intersect the black hole interior.  
 We do not know whether or not it is always the case that $\mathrm{CW}(\partial M) \subseteq \Loc(M)$.
 
 \item \emph{Local operators and the connection between the two methods}: The premise of our approach was to solve a  different problem from bulk reconstruction.  Superficial locality
 provides a way to identify the operators on a code subspace $G$ that are dual to local bulk operators.  Identification of bulk local operators has not yet been a goal of light-cone cut reconstruction, but
 it is a promising direction.   In fact, such considerations suggest a way to directly relate cut reconstruction to our program.  Consider a point $P\in \tilde{M}$ and also consider a light-cone cut $C_\pm$
 associated with singularities in correlation functions computed in a state $\psi \in G$.
 We would like to know how to tell if the bulk point associated with $P$ is the same as the bulk point associated with $C_\pm$ (clearly this is only plausible for bulk points in the intersection 
 of $\mathrm{CW}(\partial M)$ and $\Loc(M)$.

We suggest  the following  approach to this problem.  Consider a superficially local operator $\phi \in P$ and take a collection of boundary points $x_1, \ldots , x_n$ close to points in $C$.  Now, consider two different correlation functions:
\begin{align*}
 F_n(x_1, \ldots, x_n) & = \bra{\psi} O(x_1) \ldots O(x_n) \ket{\psi} \\
 G_n(x_1, \ldots, x_n) & = \bra{\psi}  \phi \:  O(x_1) \ldots O(x_n) \ket{\psi}.
\end{align*}
If $\phi$ is indeed a local operator at the vertex of the cut $C$, then a signature of that property will be encoded in the relationships between $F_n$ and $G_m$ for various values of $n$ and $m$.  We do not pursue this idea
further in the present work.

 
 \end{itemize}

\section{Discussion}
\label{sec_5}


Relying only on subregion duality between the boundary and bulk spacetimes, our construction addresses the following question. Given a CFT and a code subspace dual to an unknown geometry, can we tell if some operator is dual to a bulk local operator? To answer this question, we exploit the curious feature that numerous distinct boundary regions can reconstruct a local bulk operator. Once we identify the set of local bulk operators in the localizable region,
the relations among those operators reveal bulk causal structure.

Furthermore, because the program focuses on entanglement wedges, as opposed to the extremal surfaces themselves, the operators we identify can lie behind horizons and within entanglement shadows in many examples. As expected, however, there are still regions for which our procedure fails to completely describe locality (these regions are often behind horizons). If we assume bulk locality still holds even within these regions, its encoding in the CFT is different than that of operators in the localizable region. 

Subregion duality is a common property of holography. The holographic entanglement entropy prescription \cite{Ryu:2006bv, Hubeny:2007xt} and the fact that entanglement wedge reconstruction is possible \cite{DHW}, lead us to the conclusion that quantum error correction is a feature of any theory with a holographic description. This is an extra constraint on holographic CFTs, which must encode information in a way consistent with bulk reconstruction, and can be seen as a requirement of CFTs having a bulk dual. 

\begin{center}
\textbf{Remaining Considerations}
\end{center}
\emph{Finite $N$}:  While we have addressed how locality, for the portion of the bulk in the localizable region, emerges from quantum error correction, there are still gaps that need to be understood. To what extent does locality fail at finite $N$? Gravitational effects prohibit the existence of  local bulk observables. However, the quantum error correcting properties of subregion duality hold beyond leading order and it may therefore be elucidating to consider
an approximate form of our approach at finite $N$.  This may shed light on the subtleties of the large $N$ limit and the relationship between exact quantum gravity and the infinite $N$ theory.

 \emph{The conformal factor}: While there is no obvious way to reconstruct the conformal factor on $\tilde{M}$, we can argue that more information than just the causal structure is available to us.
 Consider a point $P \in \tilde{M}$ with the special property that for some boundary region $R \in \mathcal{R}$, both $R$ and its complement $\bar{R}$ lie in $\mathcal{Q}(P)$.
 The only geometrical interpretation of this scenario is that operators in $P$ correspond to a point on the HRT surface $\ext R$.  This means that in addition to the conformal metric
 on $\tilde{M}$ we also know the minimal area anchored extremal surfaces as well as the (regulated) areas of those surfaces, determined by the von Neumann entropies of
 corresponding boundary regions \cite{Ryu:2006bv, Hubeny:2007xt,  Lewkowycz:2013nqa, Dong:2016hjy}.  Noting that stationary surfaces  and their areas
 are not invariant under conformal transformations, the conformal factor on the metric is significantly constrained.  We leave further investigation in this direction to future work.

\vspace{.5cm}

\noindent \emph{Note}: during the final stages of this project, \cite{Kabat:2017mun} appeared on the arXiv which discusses related ideas.

\vspace{.5cm}
\noindent {\bf Acknowledgments} 

\noindent We thank Netta Engelhardt, Sebastian Fischetti, Steve Giddings, Gary Horowitz, Stefan Leichenauer,  and Aitor Lewkowycz for useful discussions.
F.S. was supported by the DOE NNSA Stewardship Science Graduate Fellowship.
 S.J.W. was supported by the Department of Energy, Office of Science, Office of High Energy Physics, under contract No.
 DE-SC0011702, and by Foundational Questions Institute
grant FQXi-RFP-1507. 
\appendix

\section{The large $N$ limit}
\label{app1}

In discussing  the main concepts in the text, we have assumed that local bulk operators exist, hoping to present our construction in an intuitive fashion. However,  exact bulk locality only exists when $N = \infty$, and gravitational effects are turned off. When $N$ is large but finite, gravitational effects demands that any gauge invariant bulk operator will be nonlocal in some way \cite{Donnelly:2015hta, Donnelly:2016}. Nevertheless, bulk effective field theory still makes sense within the code subspace of holographic CFTs. This is possible because nonlocal effects become small, since they come with some positive power of the gravitational coupling. The suppression in $N$ allows us to discuss local bulk fields (perhaps smeared over a region $\sim l_p$) and perturbatively add nonlocal effects (by appropriately dressing the fields for example), so long as we work in the appropriate code subspace.

Here, we explain how the constructions in the main text can be made  precise by appropriately applying the large $N$ limit to decouple nonlocalities due to gravity.  Consider a CFT satisfying the appropriate requirements for having a bulk dual (see e.g. \cite{Heemskerk:2009}). The theory has some parameter, $\epsilon(N)$, which corresponds to the gravitational coupling in the bulk and taking $\epsilon \rightarrow 0$ means turning off gravitational effects (i.e. sending $N\rightarrow \infty$). Different values of $\epsilon$ correspond to different boundary theories (with different central charges) with an associated Hilbert space ${H}_\epsilon$.

For $\epsilon \neq 0$, no gauge-invariant operator $\phi_\epsilon$, restricted to the appropriate code subspace ${G}_\epsilon$, will be local in the bulk.  However, as we decrease $\epsilon$, the strength of nonlocal gravitational effects decreases, and some operators and some operators in the CFT will start to resemble what one expects for local operators in semiclassical field theory; intuitively these would be the operators that would limit to local fields in the $\epsilon = 0$ limit. For example, if we think about semiclassical fields that are gravitationally dressed, the gravitational coupling suppresses the nonlocal dressing. 

Consider now a family of operators, $ \{\phi_\epsilon\}_{\epsilon>0}$, with $\phi_\epsilon$ acting on the code subspace ${G}_\epsilon$ for all $\epsilon >0$.
\footnote{
Decreasing $\epsilon$ decreases the strength of gravitational backreaction. In order to keep any nontrivial background fixed while changing the value of $\epsilon$, we must separate ``background matter'' from excitations. 
As we send $\epsilon \rightarrow 0$, the stress tensor for the background matter must be rescaled appropriately to maintain a nontrivial background.  This emphasizes the subtlety in the definition of ${G}_\epsilon$
}

\begin{definition}
Let $R \in \mathcal{R}$ be a boundary region and let $\bar{R}$ be a complement of $R$.  We say that a family $ \{\phi_\epsilon\}_{\epsilon>0}$ is \emph{reconstructable} in $R$ if for any family of operators  $\{O_\epsilon^ {\bar{R}} \}_{\epsilon>0}$ in the algebra of $\bar{R}$ for $H_\epsilon$ and for any family of states $\{\psi_\epsilon\}_{\epsilon>0}$ with $\psi_\epsilon \in G_\epsilon$,
\be
\lim_{\epsilon \rightarrow 0} \bra{\psi_\epsilon} [\phi_\epsilon, O_{\epsilon}^ {\bar{R}}] \ket{\psi_\epsilon}= 0
\ee
\end{definition}

As reviewed in section \ref{sec_2}, this implies that, when $\epsilon$ is very small, there is some operator $O_{\epsilon}^{R}$ in the algebra of $R$, whose action on $G_\epsilon$ is that of the operator $\phi_\epsilon$ (up to corrections in $\epsilon$).

Note that most of these families of operators will not limit to a semiclassical local bulk field.
The ``limit'' might be a smeared operator in $\EW(R)$ or the family of operators could oscillate forever within $\EW(R)$ and never converge in any sense.
 However, some special class of such families do limit to  local operators. 

In order to test whether or not a collection of operators approaches a local field as $\epsilon$ becomes small, we introduce a  generalization of the procedure in the text.  The idea is to make the fundamental object of
study the collection of $\epsilon$-dependent families of operators as opposed to the set of operators on a fixed code subspace.  Following the framework from  section \ref{sec_3}, we  introduce a
 map $\mathcal{Q}$ that acts on families of operators as follows: 
 
 \be
\mathcal{Q}( \{\phi_\epsilon\}_{\epsilon>0} )= \{R \  \big|  \{\phi_\epsilon\}_{\epsilon>0} \ \textrm{is reconstructible in} \ R \}
\label{Qn}
\ee

For some of these sequences, the set $\mathcal{Q}(\{\phi_\epsilon\}_{\epsilon>0})$  will be the result expected for a field localized to a point in the bulk. If this is the case, we can think of $\{\phi_\epsilon\}_{\epsilon>0}$ as a set of operators whose bulk interpretation is a semiclassical field (built on a background associated with a code subspace) whose nonlocal gravitational effects disappears as $\epsilon \rightarrow 0$. For such sequences of operators, taking the $\epsilon \rightarrow 0$ limit is can be thought of as ``undressing'' $\phi$ by consistently tuning down gravitational effects while keeping the background fixed.

We can use this new definition of $\mathcal{Q}$ to define equivalence classes of families of operators and then the notion of superficial locality\footnote{Note that the definition of superficial locality works very nicely with our new definition of $\mathcal{Q}$.
If it happened, for example,  that $\{\phi_\epsilon\}_{\epsilon>0}$ were a family of operators that oscillates from place to place as $\epsilon \rightarrow 0$, then we can be sure that this family would not be superficially local unless
it were to oscillate within a clump.} 
 exactly as we do in section \ref{sec_3}.   All of the  developments in the main text can be done in this formalism.

\section{Topology of $\tilde{M} $}
\label{app2}

In this appendix we explain how a topology on $\tilde{M}$ can be constructed using only the boundary theory.  We make no assumptions here
about whether or not clumps are present. Despite appearances, the purpose of this construction is not so much to demonstrate mathematical rigor as it is to provide motivation
for the statement that  $\tilde{M}$, an object defined in the boundary theory, can be regarded (in the absence of clumps) as a ``copy'' of $\Loc(M)$, a region of spacetime that certainly has a nice topological structure.

The boundary theory is taken to be on a flat space which, after conformal compactification, is a cylinder.  (The case where there are multiple disconnected
boundaries is a straightforward generalization of the construction below.)  A spatial region $R\in \mathcal{R}$
is thus bounded so its boundary, $\partial R$, is compact.  Choose some global coordinate system on this flat spacetime (that is, fix a conformal frame), and define a Euclidean
metric $d$ between two points via geodesic (Euclidean) distance.  We can now give a metric on $\mathcal{R}$ denoted by $D$, by defining $D(R,R^\prime)$ as the 
Hausdorff distance between $\partial R$ and $\partial R^\prime$.%
\footnote{Given a metric space $(S,d)$, the Hausdorff distance is a metric-like function that can be defined in terms of $d$ to measure the distance between
two subsets of $S$ in a reasonable fashion.  The Hausdorff distance is a legitimate metric on the collection of nonempty compact subsets of $S$ so our definition of $D$ provides a metric on the
subset of $\mathcal{R}$ where $\partial R \neq \emptyset$
because $\partial R$ is always compact. }
This definition of distance is problematic in the case where $\partial R = \emptyset$.  However,  if $\partial R_1, \partial R_2= \emptyset$ and $\partial R_3 \neq \emptyset$, we simply define
$D(R_1,R_2) = 0$ and $D(R_1, R_3) = \infty$. 

Given $\epsilon >0$, let $B_\epsilon(R)$ be the subset of $\mathcal{R}$ consisting of regions $R^\prime$ with $D(R,R^\prime) < \epsilon$.  A topology on $\tilde{M}$ can now be obtained by taking $P \in \tilde{M}$
and defining $U_\epsilon(P)$ as the set of points $P^\prime \in \tilde{M}$ such that for every $R \in \mathcal{Q}(P)$, there exists $R^\prime \in \mathcal{Q}(P^\prime) \cap B_\epsilon(R)$.  The collection of
sets $\{U_\epsilon(P) \: \big| \: \epsilon>0, P \in \tilde{M} \}$ forms a topological base from which a topology can be defined.



\begin{thebibliography}{99}


\bibitem{Maldacena:1997} 
  J.~M.~Maldacena,
  ``The Large N limit of superconformal field theories and supergravity,''
  Int.\ J.\ Theor.\ Phys.\  {\bf 38}, 1113 (1999)
  [Adv.\ Theor.\ Math.\ Phys.\  {\bf 2}, 231 (1998)]
  doi:10.1023/A:1026654312961
  [hep-th/9711200].
  
\bibitem{Witten:1998} 
  E.~Witten,
  ``Anti-de Sitter space and holography,''
  Adv.\ Theor.\ Math.\ Phys.\  {\bf 2}, 253 (1998)
  [hep-th/9802150].



\bibitem{BDHM} 
  T.~Banks, M.~R.~Douglas, G.~T.~Horowitz and E.~J.~Martinec,
  ``AdS dynamics from conformal field theory,''
  hep-th/9808016.
  
\bibitem{HKLL} 
  A.~Hamilton, D.~N.~Kabat, G.~Lifschytz and D.~A.~Lowe,
  ``Holographic representation of local bulk operators,''
  Phys.\ Rev.\ D {\bf 74}, 066009 (2006)
  doi:10.1103/PhysRevD.74.066009
  [hep-th/0606141].
  
\bibitem{KLL} 
  D.~Kabat, G.~Lifschytz and D.~A.~Lowe,
  ``Constructing local bulk observables in interacting AdS/CFT,''
  Phys.\ Rev.\ D {\bf 83}, 106009 (2011)
  doi:10.1103/PhysRevD.83.106009
  [arXiv:1102.2910 [hep-th]].
  
  
\bibitem{Heemskerk:2012mn} 
  I.~Heemskerk, D.~Marolf, J.~Polchinski and J.~Sully,
  ``Bulk and Transhorizon Measurements in AdS/CFT,''
  JHEP {\bf 1210}, 165 (2012)
  doi:10.1007/JHEP10(2012)165
  [arXiv:1201.3664 [hep-th]].
  
 
  
\bibitem{Almheiri:2014lwa} 
  A.~Almheiri, X.~Dong and D.~Harlow,
  ``Bulk Locality and Quantum Error Correction in AdS/CFT,''
  JHEP {\bf 1504}, 163 (2015)
  doi:10.1007/JHEP04(2015)163
  [arXiv:1411.7041 [hep-th]].


\bibitem{JLMS}
  D.~L.~Jafferis, A.~Lewkowycz, J.~Maldacena and S.~J.~Suh,
  ``Relative entropy equals bulk relative entropy,''
  JHEP {\bf 1606} (2016) 004
  doi:10.1007/JHEP06(2016)004
  [arXiv:1512.06431 [hep-th]].
  
\bibitem{DHW} 
  X.~Dong, D.~Harlow and A.~C.~Wall,
  ``Reconstruction of Bulk Operators within the Entanglement Wedge in Gauge-Gravity Duality,''
  Phys.\ Rev.\ Lett.\  {\bf 117}, no. 2, 021601 (2016)
  doi:10.1103/PhysRevLett.117.021601
  [arXiv:1601.05416 [hep-th]].



\bibitem{Engelhardt:2016wgb} 
  N.~Engelhardt and G.~T.~Horowitz,
  ``Towards a Reconstruction of General Bulk Metrics,''
  Class.\ Quant.\ Grav.\  {\bf 34}, no. 1, 015004 (2017)
  doi:10.1088/1361-6382/34/1/015004
  [arXiv:1605.01070 [hep-th]].



\bibitem{Morrison} 
  I.~A.~Morrison,
  ``Boundary-to-bulk maps for AdS causal wedges and the Reeh-Schlieder property in holography,''
  JHEP {\bf 1405}, 053 (2014)
  doi:10.1007/JHEP05(2014)053
  [arXiv:1403.3426 [hep-th]].




\bibitem{Hubeny:2012wa} 
  V.~E.~Hubeny and M.~Rangamani,
  ``Causal Holographic Information,''
  JHEP {\bf 1206}, 114 (2012)
  doi:10.1007/JHEP06(2012)114
  [arXiv:1204.1698 [hep-th]].


\bibitem{Headrick:2014cta} 
  M.~Headrick, V.~E.~Hubeny, A.~Lawrence and M.~Rangamani,
  ``Causality \& holographic entanglement entropy,''
  JHEP {\bf 1412}, 162 (2014)
  doi:10.1007/JHEP12(2014)162
  [arXiv:1408.6300 [hep-th]].


\bibitem{Wall:2012uf} 
  A.~C.~Wall,
  ``Maximin Surfaces, and the Strong Subadditivity of the Covariant Holographic Entanglement Entropy,''
  Class.\ Quant.\ Grav.\  {\bf 31}, no. 22, 225007 (2014)
  doi:10.1088/0264-9381/31/22/225007
  [arXiv:1211.3494 [hep-th]].
  
\bibitem{Jafferis:2014lza} 
  D.~L.~Jafferis and S.~J.~Suh,
  ``The Gravity Duals of Modular Hamiltonians,''
  JHEP {\bf 1609}, 068 (2016)
  doi:10.1007/JHEP09(2016)068
  [arXiv:1412.8465 [hep-th]].
  
 
 
\bibitem{Bousso:2012mh} 
  R.~Bousso, B.~Freivogel, S.~Leichenauer, V.~Rosenhaus and C.~Zukowski,
  ``Null Geodesics, Local CFT Operators and AdS/CFT for Subregions,''
  Phys.\ Rev.\ D {\bf 88}, 064057 (2013)
  doi:10.1103/PhysRevD.88.064057
  [arXiv:1209.4641 [hep-th]].


\bibitem{Nomura:2013lia} 
  Y.~Nomura and S.~J.~Weinberg,
  ``Entropy of a vacuum: What does the covariant entropy count?,''
  Phys.\ Rev.\ D {\bf 90}, no. 10, 104003 (2014)
  doi:10.1103/PhysRevD.90.104003
  [arXiv:1310.7564 [hep-th]].
  
\bibitem{Nomura:2014woa} 
  Y.~Nomura, F.~Sanches and S.~J.~Weinberg,
  ``Black Hole Interior in Quantum Gravity,''
  Phys.\ Rev.\ Lett.\  {\bf 114}, 201301 (2015)
  doi:10.1103/PhysRevLett.114.201301
  [arXiv:1412.7539 [hep-th]].
  
\bibitem{Papadodimas:2015jra} 
  K.~Papadodimas and S.~Raju,
  ``Remarks on the necessity and implications of state-dependence in the black hole interior,''
  Phys.\ Rev.\ D {\bf 93}, no. 8, 084049 (2016)
  doi:10.1103/PhysRevD.93.084049
  [arXiv:1503.08825 [hep-th]].


\bibitem{beny2007generalization}
C.~B{\'e}ny, A.~Kempf, and D.~W. Kribs, {\it Generalization of quantum error
  correction via the heisenberg picture},  {\em Physical review letters} {\bf
  98} (2007), no.~10 100502.

\bibitem{beny2007quantum}
C.~B{\'e}ny, A.~Kempf, and D.~W. Kribs, {\it Quantum error correction of
  observables},  {\em Physical Review A} {\bf 76} (2007), no.~4 042303.

 

\bibitem{Ryu:2006bv} 
  S.~Ryu and T.~Takayanagi,
  ``Holographic derivation of entanglement entropy from AdS/CFT,''
  Phys.\ Rev.\ Lett.\  {\bf 96}, 181602 (2006)
  doi:10.1103/PhysRevLett.96.181602
  [hep-th/0603001].

\bibitem{Hubeny:2007xt} 
  V.~E.~Hubeny, M.~Rangamani and T.~Takayanagi,
  ``A Covariant holographic entanglement entropy proposal,''
  JHEP {\bf 0707}, 062 (2007)
  doi:10.1088/1126-6708/2007/07/062
  [arXiv:0705.0016 [hep-th]].
  
  


      
\bibitem{Almheiri:2012rt} 
  A.~Almheiri, D.~Marolf, J.~Polchinski and J.~Sully,
  ``Black Holes: Complementarity or Firewalls?,''
  JHEP {\bf 1302}, 062 (2013)
  doi:10.1007/JHEP02(2013)062
  [arXiv:1207.3123 [hep-th]].



  
\bibitem{Czech:2014ppa} 
  B.~Czech and L.~Lamprou,
  ``Holographic definition of points and distances,''
  Phys.\ Rev.\ D {\bf 90}, 106005 (2014)
  doi:10.1103/PhysRevD.90.106005
  [arXiv:1409.4473 [hep-th]].
  
\bibitem{Czech:2015qta} 
  B.~Czech, L.~Lamprou, S.~McCandlish and J.~Sully,
  ``Integral Geometry and Holography,''
  JHEP {\bf 1510}, 175 (2015)
  doi:10.1007/JHEP10(2015)175
  [arXiv:1505.05515 [hep-th]].
  
\bibitem{Balasubramanian:2013rqa} 
  V.~Balasubramanian, B.~Czech, B.~D.~Chowdhury and J.~de Boer,
  ``The entropy of a hole in spacetime,''
  JHEP {\bf 1310}, 220 (2013)
  doi:10.1007/JHEP10(2013)220
  [arXiv:1305.0856 [hep-th]].
  
  
\bibitem{Balasubramanian:2013lsa}
  V.~Balasubramanian, B.~D.~Chowdhury, B.~Czech, J.~de Boer and M.~P.~Heller,
  ``Bulk curves from boundary data in holography,''
  Phys.\ Rev.\ D {\bf 89} (2014) no.8,  086004
  doi:10.1103/PhysRevD.89.086004
  [arXiv:1310.4204 [hep-th]].

  
\bibitem{Czech:2014wka} 
  B.~Czech, X.~Dong and J.~Sully,
  ``Holographic Reconstruction of General Bulk Surfaces,''
  JHEP {\bf 1411}, 015 (2014)
  doi:10.1007/JHEP11(2014)015
  [arXiv:1406.4889 [hep-th]].
  
\bibitem{Headrick:2014eia} 
  M.~Headrick, R.~C.~Myers and J.~Wien,
  ``Holographic Holes and Differential Entropy,''
  JHEP {\bf 1410}, 149 (2014)
  doi:10.1007/JHEP10(2014)149
  [arXiv:1408.4770 [hep-th]].
    
    
\bibitem{Balasubramanian:2014sra} 
  V.~Balasubramanian, B.~D.~Chowdhury, B.~Czech and J.~de Boer,
  ``Entwinement and the emergence of spacetime,''
  JHEP {\bf 1501}, 048 (2015)
  doi:10.1007/JHEP01(2015)048
  [arXiv:1406.5859 [hep-th]].
  

    
\bibitem{Engelhardt:2013tra} 
  N.~Engelhardt and A.~C.~Wall,
  ``Extremal Surface Barriers,''
  JHEP {\bf 1403}, 068 (2014)
  doi:10.1007/JHEP03(2014)068
  [arXiv:1312.3699 [hep-th]].
      
    
\bibitem{Hubeny:2013dea} 
  V.~E.~Hubeny and H.~Maxfield,
  ``Holographic probes of collapsing black holes,''
  JHEP {\bf 1403}, 097 (2014)
  doi:10.1007/JHEP03(2014)097
  [arXiv:1312.6887 [hep-th]].
    
    
    
    
    
\bibitem{Marolf:2008tx} 
  D.~Marolf,
  ``Black Holes, AdS, and CFTs,''
  Gen.\ Rel.\ Grav.\  {\bf 41}, 903 (2009)
  doi:10.1007/s10714-008-0749-7
  [arXiv:0810.4886 [gr-qc]].
   

    

 
\bibitem{Engelhardt:2016} 
  N.~Engelhardt and G.~T.~Horowitz,
  ``New Insights into Quantum Gravity from Gauge/gravity Duality,''
  Int.\ J.\ Mod.\ Phys.\ D {\bf 25}, no. 12, 1643002 (2016)
  doi:10.1142/S0218271816430021
  [arXiv:1605.04335 [hep-th]].
  


\bibitem{Engelhardt:2016vdk} 
  N.~Engelhardt,
  ``Into the Bulk: A Covariant Approach,''
  Phys.\ Rev.\ D {\bf 95}, no. 6, 066005 (2017)
  doi:10.1103/PhysRevD.95.066005
  [arXiv:1610.08516 [hep-th]].
  
\bibitem{Engelhardt:2016crc} 
  N.~Engelhardt and G.~T.~Horowitz,
  ``Recovering the spacetime metric from a holographic dual,''
  arXiv:1612.00391 [hep-th].

\bibitem{Engelhardt:2017lfb} 
  N.~Engelhardt and S.~Fischetti,
  ``Causal Density Matrices,''
  arXiv:1703.05328 [hep-th].
  

    

\bibitem{Gary:2009ae} 
  M.~Gary, S.~B.~Giddings and J.~Penedones,
  ``Local bulk S-matrix elements and CFT singularities,''
  Phys.\ Rev.\ D {\bf 80}, 085005 (2009)
  doi:10.1103/PhysRevD.80.085005
  [arXiv:0903.4437 [hep-th]].
  
  
\bibitem{Maldacena:2015} 
  J.~Maldacena, D.~Simmons-Duffin and A.~Zhiboedov,
  ``Looking for a bulk point,''
  JHEP {\bf 1701}, 013 (2017)
  doi:10.1007/JHEP01(2017)013
  [arXiv:1509.03612 [hep-th]].


    
\bibitem{Donnelly:2015hta} 
  W.~Donnelly and S.~B.~Giddings,
  ``Diffeomorphism-invariant observables and their nonlocal algebra,''
  Phys.\ Rev.\ D {\bf 93}, no. 2, 024030 (2016)
  Erratum: [Phys.\ Rev.\ D {\bf 94}, no. 2, 029903 (2016)]
  doi:10.1103/PhysRevD.94.029903, 10.1103/PhysRevD.93.024030
  [arXiv:1507.07921 [hep-th]].
  
  
\bibitem{Donnelly:2016} 
  W.~Donnelly and S.~B.~Giddings,
  ``Observables, gravitational dressing, and obstructions to locality and subsystems,''
  Phys.\ Rev.\ D {\bf 94}, no. 10, 104038 (2016)
  doi:10.1103/PhysRevD.94.104038
  [arXiv:1607.01025 [hep-th]].


    

  


\bibitem{Lewkowycz:2013nqa} 
  A.~Lewkowycz and J.~Maldacena,
  ``Generalized gravitational entropy,''
  JHEP {\bf 1308}, 090 (2013)
  doi:10.1007/JHEP08(2013)090
  [arXiv:1304.4926 [hep-th]].
  
\bibitem{Dong:2016hjy}
  X.~Dong, A.~Lewkowycz and M.~Rangamani,
  ``Deriving covariant holographic entanglement,''
  JHEP {\bf 1611} (2016) 028
  doi:10.1007/JHEP11(2016)028
  [arXiv:1607.07506 [hep-th]].
  
  
\bibitem{Kabat:2017mun} 
  D.~Kabat and G.~Lifschytz,
  ``Local bulk physics from intersecting modular Hamiltonians,''
  arXiv:1703.06523 [hep-th].
  
  
 
\bibitem{Heemskerk:2009} 
  I.~Heemskerk, J.~Penedones, J.~Polchinski and J.~Sully,
  ``Holography from Conformal Field Theory,''
  JHEP {\bf 0910}, 079 (2009)
  doi:10.1088/1126-6708/2009/10/079
  [arXiv:0907.0151 [hep-th]].

  
  
  
  
%
%
%
%
%
%
%
%
%
%
%
%
%
%
%
%
%
%
%
%
%
%
%
%
%
%
%
%
  \end{thebibliography}
\end{document}